\newif \ifcomments
\newif\ifdeferproofs      %
    \titleformat{\subsubsection}[runin] %
      {\normalfont\normalsize\bfseries} %
      {} %
      {0em} %
      {} %
      [] %
\pgfplotsset{compat=1.18}
\newtcolorbox{gamebox}[2][]{
    breakable,
    colback=white,
    colframe=black,
    boxrule=0.5pt,
    colbacktitle=black,
    coltitle=white,
    enhanced,
    left=2mm,
    right=2mm,
    top=2mm,
    bottom=2mm,
    after skip=15pt,
    fonttitle=\bfseries,
    attach boxed title to top left={yshift=-0.05in,xshift=0.15in},
    boxed title style={boxrule=0pt,colframe=white,},
    title={#2},
    #1
}
\crefname{appendix}{Appendix}{Appendices}
\Crefname{appendix}{Appendix}{Appendices}
\theoremstyle{remark}
\newcommand{\R}{\mathbb{R}}
\newcommand{\bX}{\overline{X}}
\newcommand{\bY}{\overline{Y}}
\newcommand{\W}{\mathcal{W}}
\newcommand{\eps}{\varepsilon}
\newcommand{\E}{\mathbb{E}}
\newcommand{\leader}{\mathsf{leader}}
    \newcommand{\kushal}[1]{{\color{red}{[KB: #1]}}}
    \newcommand{\kaya}[1]{{\color{green}{[KA: #1]}}}
    \newcommand{\aditya}[1]{{\color{magenta}{[AS: #1]}}}
    \newcommand{\kushal}[1]{}
    \newcommand{\kaya}[1]{}
    \newcommand{\aditya}[1]{}
\newcommand*{\email}[1]{%
    \footnotesize\href{mailto:#1}{#1}\par
    }
\title{Timing Games in Responsive Consensus Protocols}
\author{Kaya Alpturer\thanks{E-mail: \email{kalpturer@princeton.edu}. Work done while interning at Category Labs.}\\[0.3em]Princeton University \and Kushal Babel\thanks{E-mail: \email{babel@cs.cornell.edu}.}\\[0.3em]Category Labs \and Aditya Saraf\thanks{E-mail: \email{as2777@cornell.edu}. Work done while interning at Category Labs.}\\[0.3em]Cornell University}
\date{}
\newcommand{\ts}{\textsuperscript}
\begin{document}

\maketitle
\begin{abstract}
	Optimistic responsiveness---the ability of a consensus protocol to operate at the speed of the network---is widely used in consensus protocol design to optimize latency and throughput.
	However, blockchain applications incentivize validators to play \textit{timing games} by strategically delaying their proposals, 
    since increased block time correlates with greater rewards.
	Consequently, it may appear that responsiveness (even under optimistic conditions) is impossible in blockchain protocols.
    In this work, we develop a model of timing games in responsive consensus protocols  
    and find a prisoner's dilemma structure, where 
    cooperation (proposing promptly) is in the validators' best interest, but individual incentives encourage validators 
    to delay proposals selfishly.    
    To attain desirable equilibria, 
    we introduce dynamic block rewards
    that decrease with round time to explicitly incentivize faster proposals. 
    Delays are measured through a \textit{voting} mechanism, where other validators vote on the current leader's round time. 
    By carefully setting the protocol parameters, the voting mechanism allows validators to coordinate and reach the cooperative equilibrium, benefiting all through a higher rate-of-reward.
	Thus, instead of responsiveness being an unattainable property due to timing games,
	we show that responsiveness itself can promote faster block proposals.
    One consequence of moving from a static to dynamic block reward 
    is that validator utilities become more sensitive to latency,
    worsening the gap between the best- and worst-connected validators.
    Our analysis shows, however, that this effect is minor in both 
    theoretical latency models and simulations based on real-world networks.
    
\end{abstract}

\section{Introduction}
A major focus in consensus protocol design for Proof-of-Stake blockchains has
been designing protocols that are \emph{optimistically responsive}~\cite{YMRGA19}. 
In the happy path, these protocols proceed at the speed of the network, 
whereas non-responsive protocols always wait for the worst case network delay.
Responsiveness helps the protocol adapt to network conditions, 
by making rounds move faster than in non-responsive protocols like Ethereum, 
which have a fixed round time.
Adaptive round times benefit from lower latency and higher throughput.%
However, responsive protocol design is complicated by the fact that the blockchain
setting has novel incentive structures that highly depend on the timing of the
validators' actions. 
One of the major sources of reward for validators is maximal extractable value (MEV)~\cite{babel2023clockwork,daian2020flash}, 
the additional value the validators are able to extract by building their blocks strategically.
Examples of MEV include the profit that validators make from arbitrage between exchanges
(centralized and decentralized), frontrunning or backrunning user transactions, 
transaction fees, and so on.
As MEV tends to increase with proposal time \cite{OKVKMT23},
a selfish leader will delay their block proposal as long as possible, even risking a timeout. 
This phenomenon has been called a \textit{timing game}
\cite{SSTPSM23}, and has been observed on blockchains such as Ethereum~\cite{SSTPSM23} 
and recently on Solana~\cite{solana-slowpokes,solana-chorusone-timing}.

Timing games are undesirable as they increase latency to users, and increase the likelihood of missed blocks, reducing throughput even in non-responsive protocols. 
Timing games pose an even greater risk in blockchains relying on responsive protocols, where slower round times would invalidate the responsiveness offered by the underlying consensus protocol.
A natural question then is the following:
\textbf{Is optimistic responsiveness possible in blockchain protocols when there
 are strong incentives to delay blocks due to MEV?}

In this paper, we propose explicitly rewarding timely block production to shift
incentives against timing games. If faster block proposals are assigned greater
block reward, the reward can counterbalance MEV.

Implementing block rewards that depend on proposal duration is not trivial. One
challenge is that there is no reliable measure of time available to the
protocol. For instance, relying on block timestamps alone is not possible as
they can be strategically manipulated \cite{YSZ23}.
To measure proposal durations accurately, we introduce a \emph{timeliness
vote}. All validators report their observed durations, and the
reported durations are aggregated to form a proxy to the real duration. However, 
strategic validators may misreport their observed time, either unilaterally or as 
part of a coalition.

Our results suggest that responsiveness can induce honest voting, 
which disincentivizes timing games. 
This is because timing games often take the form of a prisoner's dilemma among validators. 
Validators can \emph{cooperate} by proposing early, which
increases the speed of the system, and hence the rate of block reward for everyone. 
Alternatively, validators can defect by proposing late, greedily gaining some MEV while slowing down the system.
Since each validator is a small part of the network, 
proposing late is a dominant strategy in the absence of coordination 
(i.e., the standard prisoner's dilemma result). Honest voting, along 
with time-decreasing block reward, allows validators to coordinate to 
reach the early-proposing equilibrium rather than the late-proposing equilibrium.
While it is clear that MEV allows validators to profit from users in the single-block horizon, 
our work highlights the surprising fact that MEV can hurt validators in the long-run by slowing down the rate of block reward.

Finally, we consider a notion of \textit{fairness} that models disparities in rewards to validators due to heterogeneous latencies. 
Being better connected to other validators is helpful under both dynamic and static block rewards. Unfortunately, this fact encourages geographical 
collocation, which compromises decentralization. We show that, while fairness 
does degrade with time-decreasing block rewards, the gap is small, in a variety of theoretical and empirical
settings.

\subsection{Overview of results}
We develop \textbf{a model of timing games in optimistically responsive BFT
protocols}.
Diverging from prior work, we explicitly incorporate the randomness
of peer-to-peer latencies.
Modeling randomness ensures that validators cannot distinguish intentional
delays from genuinely higher latency.

\subsubsection{Early-proposing equilibria.}
We show that if the block reward is set appropriately and individual entities have bounded stake, early-proposing and honest-reporting of
timeliness is a Nash equilibrium (\cref{thm:latency-NE}). 
We then argue that this equilibrium is robust in several ways. We first consider ``whales'', i.e., single validators who control many units of stake, and show that they won't deviate as long as they are sufficiently small (e.g., controlling less than $1/3$ of the total stake under standard parameter choices). 
We also argue that rational coalitions are unstable. Assuming a low-latency condition, we prove that a certain natural family of ``large'' coalitions will revert back to the desirable equilibrium (\cref{thm:large-coalition}). We also argue that smaller coalitions are unstable because members would be incentivized to join many small coalitions, to the detriment of others.

\subsubsection{Fairness.}
We analyze fairness with respect to latencies.
We first theoretically study two models of latency: one where nodes are arranged on a line and
one where they are clustered into two groups, and show that fairness gets
marginally worse under dynamic block rewards compared to static block rewards (\cref{thm:fairness-line,thm:fairness-cluster,thm:fairness-late}). We then show similar results in simulations of real-world latency data.
Our results suggest that there is some fairness cost to timing game
mitigation. 
\subsubsection{Alternative approaches.}
An alternative incentive knob to decreasing block reward is to decrease leader election probabilities as a function of their round duration. However, it's more difficult to incentivize honest voting in that setting, as a decrease in others' election probability directly increases one's own.

\subsection{Related work}
\subsubsection{Timing incentives.}
Timing games and incentives for delaying one's block in order to propose a more
rewarding block have been both empirically observed~\cite{solana-slowpokes,SSTPSM23,solana-chorusone-timing} and theoretically analyzed~\cite{SSTPSM23}.
Besides modeling responsive protocols, our model diverges from the approach of
\cite{SSTPSM23} by modeling peer-to-peer latencies.
The increase in MEV with time has been empirically measured by \cite{OKVKMT23,WZQSG23}.
Solutions to timing games have been explored as well \cite{C22,SXNC25}.
The blog post \cite{C22} proposes rewarding timeliness using a 
proxy measurement of same-slot committee votes in Ethereum. 
\cite{SXNC25} proposes measuring transaction
arrival rates and correcting timing incentives by redistributing execution layer
fees. Our setting, however, requires analyzing incentives explicitly, as the effect of deviations directly affects the protocol's operation.

\subsubsection{Time-rotating leaders.}
A recent proposal by \cite{KGMS25} involves modifying consensus so that leaders are elected for a certain period of time, rather than a round (or number of rounds). Here, even with a constant block reward, leaders may prefer to propose as fast as possible, as that allows them to propose more often. This is a promising approach for ensuring timely inclusion of \textit{non-MEV} transactions, such as direct transfers. 

However, for the transactions that generate MEV, such as decentralized finance (DeFi) transactions, the leader would likely simply hold onto those transactions and only submit them in their final block. Doing so would allow them to better optimize MEV in the last period. For a specific example, the arbitrage profit between a centralized exchange and a decentralized exchange over a certain time range $T$ grows proportional to $T^{3/2}$ \cite{MMR24}. Thus, the leader may prefer to wait until their final block to include DeFi transactions.

\subsubsection{Responsive Consensus.}
There is a vast literature on fault-tolerant consensus protocol design that
utilizes responsiveness but does not consider timing incentives.
Responsiveness was first introduced in \cite{ADLS94},
and optimistic responsiveness with the widely used fast-vs-slow path framework
was introduced in \cite{PS18}. Numerous BFT consensus protocols were designed focusing on
blockchain settings with responsiveness in mind
\cite{GKSSX22,GSDHAC23,JB25,YMRGA19}.
The tendency of leaders to slow down has been acknowledged in \cite{KGMS25},
and punishing misses by changing the leader election schedule has been explored
by \cite{CGKLMSS22,TKSK24}.
However, as far as we are aware, these  works do not consider timing games
from a game-theoretic perspective and the proposed solutions are not analyzed
in a model where the validators are rational.

\subsubsection{Blockchain protocol incentives.}
In addition to timing games, there is a growing body of work investigating
incentive incompatibilities in blockchain protocols---including selfish
mining in Bitcoin \cite{ES14,SSZ16}, Proof-of-Stake attacks \cite{AW24,BNPW19}, 
and timestamp manipulation \cite{YSZ23,YTZ22}.
Moreover, MEV has been observed to interfere with performance and scaling
\cite{mevlimitscaling}. Lastly, our results show that when block rewards are too
low, timing games are inevitable. A similar observation was made for Bitcoin's
block reward in relation to forking attacks in \cite{CKWN16}.

\subsection{Roadmap}
In \cref{sec:preliminaries}, we give a brief summary of the optimistically responsive consensus protocols 
that we consider and the solution concepts we use.
In \cref{sec:model}, we formalize timing games in responsive protocols with time-dependent block reward. In \cref{sec:early-equilibria}, we present our equilibria analysis. \cref{sec:fairness} analyzes the fairness for both early and late-proposing scenarios in several 
models of network latencies.
Lastly we give guidance on parameter settings in~\cref{sec:reward} and conclude with a discussion of future work in \cref{sec:discussion}. 

Proofs are deferred to \cref{apx:proofs-equilibria,appendix:coalition_resistance,apx:fairness}.
\cref{appendix:sim} has simulations for the fairness models.
\cref{appendix:inflation} discusses utility modeling.
\cref{apx:leader} explores an alternative approach of modifying leader election probabilities as the incentive knob instead of block reward. For a glossary of notation, see \cref{apx:glossary}.

\section{Preliminaries} \label{sec:preliminaries}
\subsubsection{Optimistically responsive Consensus.}
We adopt the rotating leader framework of optimistically responsive consensus 
protocols (such as \cite{YMRGA19,JB25}) where validators vote on a proposed block by the leader, and the next leader aggregates a certain threshold number of votes to form a ``quorum certificate''.
 In the happy path where the quorum is able to be formed, 
the protocol makes progress as soon as messages arrive, rather than waiting for the worst case time bound.
If the nodes do not hear from the leader within a certain duration, a \textit{timeout} is triggered to proceed with the next leader.
The responsive paradigm is in contrast to fixed-duration rounds. For instance, Ethereum blocks are 12 seconds 
regardless of how fast the proposal and vote messages are propagated. The same is true for Tendermint~\cite{B16} as well.
As we analyze the game from a rational model, we will mostly focus on the common case of happy-path operation of the protocol. For leader election, we assume a leader is randomly sampled each round, as is the case for Proof-of-Stake blockchain protocols.

\subsubsection{Subgame perfect Bayes-Nash equilibria.}
We are modeling the  blockchain validators as strategic participants in an infinitely 
repeated game. Therefore, the solution concept we use is subgame perfect Bayes-Nash equilibria.
A strategy profile $\vec{\sigma}$, which specifies the voting and proposal strategy of each validator,
is a subgame perfect Bayes-Nash equilibrium if, in any possible subgame, 
the validator is best-responding with respect to its expected utility by utilizing $\vec{\sigma}$.

\subsubsection{Coalition resistance.} We will also consider coalitions best-responding to other validators in 
a strategy profile. While we discuss the types of coalition we consider in \cref{sec:model}, 
the key condition we require is coalition-resistance: no coalition (satisfying some condition such as size)
can deviate from honest behaviour (early proposing and honest voting) and strictly get more utility in expectation.

\section{Model} \label{sec:model}
Our model abstracts away many complexities of running consensus among a distributed set of validators in order to 
capture a wide range of protocol implementations. 
One key complexity that we do retain is the existence of pairwise latencies among validators and 
we are concerned with the time messages take from one validator to another. Whether the delivery is through a gossip network
or direct communication is abstracted away. 
We first explain the model basics, then present the full game, and finally discuss agent utilities in more detail. 

\subsubsection{Setup.}
In our model, we consider the game induced by introducing a block reward
that depends on timeliness of proposals. There are $n$ players $\{1, \dots, n\} = [n]$ participating in the 
protocol.\footnote{Each agent is a uniform unit of stake. 
We will consider the effect of heterogeneous stake distributions and collusion via coalitions.}
A leader is elected uniformly randomly each round and, since the protocol is responsive, may propose a block as soon as they have received the previous block and a quorum of $c$ validators' votes (the quorum threshold $c$ is $\lfloor\frac{2n}{3}\rfloor+1$ for most protocols). A strategic leader may choose to intentionally delay its block by some delay $\Delta$.
Validators measure the {\it round duration} locally as the time elapsed on their clock between receiving the previous leader's proposal and the current leader's block proposal. 
$\tau$ represents the timeout duration after which honest validators time out on the current leader's proposal. In other words, if the round duration exceeds $\tau$ for an honest validator, it will timeout.
Validators move to the next round after a quorum (also $c$-sized for simplicity) has either timed out, or confirmed the current leader's proposal.
Finally, $L_{i \to j}$ represents the latency distribution of messages from $i$ to $j$.\footnote{While we do not differentiate between block propagation 
latency and voting latency, they are often different in real life protocols, and it is straightforward to introduce separate distributions for both 
in our results. }

\subsubsection{Timeliness measurement.}
We propose a novel timeliness measurement that does not exist in current consensus implementations.
Upon receiving a proposal, each validator votes whether to approve the block, and additionally includes the 
observed proposal duration (their vote is implicitly infinite if not reported). The measurement time is then defined as the $m\ts{th}$ smallest vote, i.e., the $m\ts{th}$ \textit{order statistic} of the votes. We assume votes are aggregated honestly.\footnote{One possible implementation would be to require that timing votes are directly submitted on-chain by validators, and aggregated over a sufficiently long window spanning a number of leaders. This allows one to sidestep any strategic aggregation issues that would arise if the next leader is responsible for aggregation.}

\subsubsection{Optimal-delaying oracle, bounded latencies, and mostly honest timeouts.}
We assume that the validators are able to delay maximally without missing their block. We model this with the existence of an oracle that sees 
all the latency realizations and informs the leader of the largest delay that ensures their block doesn't timeout.
This assumption does not map to practice, but it only
makes our results stronger, as we will show that validators prefer to propose early 
despite this ``timeout protection''.
Note that this oracle depends on the voting strategy of validators, as some validators may timeout dishonestly. 

As this maximum delay varies in each round due to the latency realizations, let $\Delta_r^*$ 
be the maximum delay in round $r$. We drop the subscript when it is clear from context. 
To ensure that all validators can participate, we require $\Delta^* \ge 0$. 
In other words, if a leader does not intentionally delay, their proposal will not timeout. 
This is equivalent to assuming (1) a bounded latency condition, (2) a sufficiently long timeout, and (3) a limited number of validators will dishonestly time out. Condition (3) essentially requires that $c$ validators will not \textit{timeout} dishonestly, though they may vote dishonestly otherwise. 
We know that $\Delta^*_r \le \tau$, since latencies are nonnegative. Thus, $\tau$ should be thought of as the maximum possible delay over all latency realizations.
While there may exist an incentive for large coalitions to censor other validators (i.e., 2/3 of the network can censor out the other 1/3), or benefit their coalition members by extending the timeout, these issues already exist without timing games, and our work does not aim to address them. Instead, we focus on incentives regarding the timeliness vote.

\subsection{Formal game}
We now present the formal model of the game. We present a general round, and then note the special differences that apply to the first round.
For most of our variables, we denote the variable's value in round $R$ by a subscript.
\begin{figure}[htp]
    \centering
    \begin{tikzpicture}[
            every node/.style = {align=center},
            Line/.style = {-angle 90, shorten >=2pt},
            Linel/.style = {-angle 90, shorten >=4pt},
            Brace/.style args = {#1}{semithick, decorate, decoration={brace,#1,raise=2pt,
                                     pre=moveto,pre length=2pt,post=moveto,post length=2pt,}},
            Bracel/.style args = {#1}{semithick, decorate, decoration={brace,#1,raise=20pt,
                                     pre=moveto,pre length=2pt,post=moveto,post length=2pt,}},
            ys/.style = {yshift=#1},
            rl/.style = {raise=#1}
        ]
        \linespread{0.9}                    
        \coordinate (a) at (0,0);
        \coordinate[right=22mm of a]    (b);
        \coordinate[right=13mm of b]    (c);
        \coordinate[right=14mm of c]    (d);
        \coordinate[right=38mm of d]    (e);
        \coordinate[right=16mm of e]    (f);
        \coordinate[right=5mm of f]    (g);
        
        \draw[Line] (a) -- node[above] {}    (b) -- (f)
                        -- node[above] {}         (g) node[below left] {time};        
    
        \draw[Line] ([ys=11mm] e) node[above] {$t_R$} -- (e);
        \draw[Line] ([ys=11mm] b) node[above] {$t_{R-1}$} -- (b);
        \draw[Line] ([ys=7mm]  d) node[above] {quorum received by $i$} -- (d);
        
        \draw[Line] ([ys=7mm] e) node[above] {} -- (e);
        \draw[Line] ([ys=7mm] b) node[above] {} -- (b);
        \draw[Brace] (b) -- node[above=3pt] {$l_{j \to i}$} (c);
        \draw[Line] ([ys=7mm] d) node[above] {} -- (d);
        \draw[Brace=mirror] (d) -- node[below=3pt] {$\Delta$} (e);
        \draw[Brace=mirror] (b) -- node[below=3pt] {$s_{R-1}$} (d);

    \end{tikzpicture}
    \caption{Timeline: $t_R$ represents the time at which 
    the leader of round $R$ broadcasts their block. 
    The leader of round $R-1$ is $j$ and the leader of round $R$ is $i$.}
\end{figure}
\begin{gamebox}[]{timing game}
\begin{enumerate}
    \item Let $R$ be the current round, and suppose $i$ is the current leader and $j$ led the previous round. 
    Let $t_{R-1}$ represent the time when $j$ broadcasts its block. 
    Let $s_{R-1}$ denote the duration between when $j$ broadcasts their block and when $i$ receives the block and its quorum. 
    \item The leader $i$ learns their maximum delay $\Delta^*_R$
    and chooses a delay $\Delta \in [0, \Delta^*_R]$.
    \item Let $k \neq i$ be an arbitrary validator. 
        \begin{enumerate}
            \item The latency $l_{i \to k} \sim L_{i \to k}$ is sampled.
            \item $k$ receives the block at time $t^{k}_{R} = t_{R-1} + s_{R-1}+ \Delta + l_{i \to k}$. 
            \item $k$ votes according to their voting strategy, which is a function of the block times they have seen. 
            So $v^k_R =\sigma^V_k(\mathbf{t}^k_{1:R}) \in \mathbb{R}^+$. $k$ then publishes
            its vote
            where $\sigma^V_k$
            is the voting strategy of validator $k$.
        \end{enumerate}
    \item Let $v^*$ refer to the aggregated timeliness vote ($m\ts{th}$ smallest vote in the (multi)set $\{v_R^k\}_k$). $i$ receives utility $M(\Delta+s_{R-1}) 
    + B(v^*)$, where $M$ denotes the MEV and $B$ the block reward.
    \item The next leader is randomly sampled, and the game repeats from 1.
\end{enumerate}
The first round of this game is slightly different. The game starts at time $t = 0$, 
so $s_{0} = 0$, i.e., the leader $i$ can propose as soon as the game starts. 
As a result, validator $k$ receives the block at time 
$t_1^k = \Delta + l_{i \to k}$. Otherwise, the first round matches the others.
\end{gamebox}

\subsubsection{Assumptions.} We highlight some additional assumptions for our model. 
\begin{enumerate}[(a)]
    \item \emph{Clock synchronization.} We do not deal with clock drift in this model. 
    We assume that all validators agree on the start time of the first round and that their clocks tick at the same rate. 
    \item \emph{Hidden latency realizations.} While we assume the latency distributions are public, 
    the realized latencies are unknown. So, validators don't know exactly when the previous leader proposed, 
    as they can't distinguish the low-latency with delayed-proposal case from the high-latency with early-proposal case. They also don't know exactly when other validators receive a block. 
    \item \emph{Efficient block building.} 
    Everyone proposes every round and extracts all potential MEV (given by function $M(\cdot)$) at the time of proposal.\footnote{There may be incentives for validators to coordinate across the reigns of different leaders and exclude MEV-generating transactions for several rounds before finally claiming and sharing the MEV. Note that this strategy already exists even with static block rewards and is not empirically observed, suggesting practical coordination barriers.}
\end{enumerate}

\subsection{Utilities and coalitions}
\subsubsection{Rewards.} There are several sources of reward that drive the utilities of validators in the system.
The main source of timing games is the MEV, which increases for blocks that are built later in time due to an increase in arbitrage opportunities, more time to grind through ordering combinations for the given transactions, and a larger mempool (available user transactions).
The protocol, on the other hand, assigns reward for proposing a block to the proposer and voter rewards for block validation. We model MEV as an increasing function of the time between block proposals. We ignore voting rewards as they are constant.
\begin{itemize}
\item \emph{MEV reward:} $M(x) : \R \to \R$ is a positive increasing function. In the fairness analysis, we will assume that it is of the form $\mu x + \mu_0$ for convenience.
\item \emph{Block reward:} $B$ is chosen by the protocol designer and can be a function of 
information available to the protocol such as the timeliness vote. We design a linearly decreasing block reward that reaches $0$ when the timeliness vote is equal to $\tau$. Thus, $B(x) = b_0 - bx$ for $b \geq 0$, where $b_0 = b\tau$.
\end{itemize}
\begin{figure}[ht]
    \centering
    \begin{tikzpicture}
      \begin{axis}[
        width=0.8\textwidth, height=3cm,
        axis lines = left,
        xmin = 0, xmax = 5.5,
        ymin = 0, ymax = 4,
        xtick = {0, 3},
        xticklabels = {$0$, $\tau$},
        ytick = {3},
        yticklabels = {$b_0$},
        legend pos = north east,
        legend style = {draw=none, fill=none},
        legend cell align = {left}
      ]
        \addplot[domain=0:3,color={rgb,255:red,0;green,114;blue,178}, thick] {3-x};
        \addlegendentry{$B(x) = b_0 - bx$}
    
        \addplot[domain=0:3,color={rgb,255:red,213;green,94;blue,0}, thick] {0.3+0.3*x*x};
        \addlegendentry{$M(x)$}
      \end{axis}
    \end{tikzpicture}    
    \caption{Reward functions.}
    \label{fig:rewards}
\end{figure}
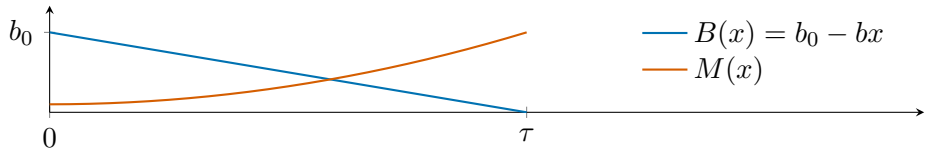

\subsubsection{Long-run time-averaged utility.} 
Each player $i$ has utility equal to their reward per unit of time, in the long-run. 
For each round $r$, let $\Delta_r$ denote the delay chosen by the leader of that 
round $\leader_r$, and let the random variable $S_{r-1}$ denote the duration between the leader
of round $r-1$ proposing their block to the time when the current leader can
start proposing, which is a function of the latency distribution. 
The duration $\Delta_r+S_{r-1}$ determines the MEV reward that the leader receives,
as well as the duration that round lasts. 
Let $v^*_r$ denote the timeliness vote in round $r$. 
Then, player $i$'s utility is:\footnote{An astute reader may notice that we are not modeling inflation, which is relevant because players, through their voting, have some control over how much block reward is printed. However, inflation does not qualitatively change any of our results. We defer a longer discussion to \cref{appendix:inflation}.}

\begin{align*}
    \E \left[
    \lim_{R \to \infty}
    \frac{\sum_{r=0}^{R}\mathbb{I}[i = \leader_r] 
    (M(\Delta_r+S_{r-1}) + B(v^*_r))}
    {\sum_{r=1}^R \Delta_r + S_{r-1}}
    \right]
\end{align*}

\subsubsection{Coalitions.} 
We consider two types of coalitions: (1) a single entity (``whale'') controlling multiple units of stake; (2) separate agents cooperating in a group.
\begin{itemize}
    \item \emph{Whale:} 
    A whale $C \subseteq [n]$ is controlled by a single entity, which is only concerned with the joint utility of all nodes in $C$. So, a strategy profile $\sigma$ is $\alpha$-whale proof if, for any $C \subseteq [n]$ such that $|C|/n \le \alpha$, there exists no deviation that improves the joint utility of $C$. 
    Note that for all $i,j \in C$, $L_{i, j} = 0$. In other words, whales' stake is assumed to be co-located. 
    
    \item \emph{Rational coalition:} 
    Let $C \subseteq [n]$ denote a rational coalition among separate agents. Here, we require that any suggested deviation is itself a Bayes-Nash equilibrium, and we consider each agent's utility.     In other words, a strategy profile $\sigma$ is $\alpha$-rational coalition proof if, for any deviation $\sigma'$, one of the following two conditions holds. 
    \begin{enumerate}[a]
        \item Some $i \in C$ has higher utility when everyone plays $\sigma$ as opposed to the coalition playing $\sigma'$ while the rest play $\sigma$.
        \item Some $i \in C$ prefers to deviate from $\sigma'$, given that the rest of the coalition plays $\sigma'$ and the non-coalition validators play $\sigma$.
    \end{enumerate}
\end{itemize}

\section{Early-proposing truthful equilibria} \label{sec:early-equilibria}
Let $\sigma^*$ be the truthful strategy profile where each leader proposes without delay ($\Delta = 0$) and voters honestly report their observed round duration. In this section, we investigate when $\sigma^*$ is an equilibrium. Full results are deferred to \cref{apx:proofs-equilibria}.

\subsection{Honest equilibria}
We first define $S^{j, i}$, the random variable that represents the time from 
when current leader $i$ can start proposing, given that $j$ led the previous round. 
For some validator $k$, the time it takes for them to receive $j$'s block and 
send a vote to $i$ is distributed according to $L_{j \to k} + L_{k \to i}$. 
For a quorum size of $c$, let $Q^{j \to i}_c$ denote the $c\ts{th}$ order statistic on 
$\{L_{j \to k} + L_{k \to i}\}_k$. This is the exact time when $i$ has enough votes to form a quorum. 
Since $i$ also needs the block to build on it, $S^{j, i} = \max(Q^{j \to i}_c, L_{j\to i})$, 
where the max is necessary because it's possible for $i$ to receive the quorum 
before they have seen $j$'s block. 

If $i$ delays by $\Delta \ge 0$, 
they extract $M(\Delta + S^{j,i})$ due to MEV. 
For the duration votes, note that validator $k$'s honest report, 
given that $i$ delays by $\Delta$, is distributed according to $V^k = \Delta + S^{j ,i} + L_{i \to k} - L_{j \to k}$.
Aggregating at least $m$ of these votes, the block reward of leader $i$ for a delay of $\Delta$ 
is distributed according to $B(\Delta+S^{j, i} + E^{i-j}_{m})$
where $E^{i-j}_m$ represents the $m\ts{th}$ order statistic of $\{L_{i \to k} - L_{j \to k}\}_k$. 
Intuitively, $E^{i-j}_{m}$ represents the error in measuring the leader's 
duration by the timeliness vote due to latency in the observations.

The honest proposal strategy is to set $\Delta = 0$, so leader $i$ would receive total reward 
$M(S^{j, i}) + B(S^{j, i} + E^{i-j}_{m})$. Lastly, observe that
$\Delta^* = \tau - S^{j,i} - E^{i - j}_c$ for leader $i$ preceded by $j$.\footnote{Note 
that we now use the $c\ts{th}$ order statistic as the quorum threshold is $c$ as $\Delta^*$
is determined by $c$ validators not timing out.}
Figure~\ref{fig:honest-voting} illustrates the timeline of voting honestly.
\begin{figure}[htp]
    \centering
    \begin{tikzpicture}[
            every node/.style = {align=center},
            Line/.style = {-angle 90, shorten >=2pt},
            Linel/.style = {-angle 90, shorten >=4pt},
            Brace/.style args = {#1}{semithick, decorate, decoration={brace,#1,raise=2pt,
                                     pre=moveto,pre length=2pt,post=moveto,post length=2pt,}},
            Bracel/.style args = {#1}{semithick, decorate, decoration={brace,#1,raise=18pt,
                                     pre=moveto,pre length=2pt,post=moveto,post length=2pt,}},
            ys/.style = {yshift=#1},
            rl/.style = {raise=#1}
        ]
        \linespread{0.9}                    
        \coordinate (a) at (0,0);
        \coordinate[right=22mm of a]    (b);
        \coordinate[right=16mm of b]    (c);
        \coordinate[right=16mm of c]    (d);
        \coordinate[right=24mm of d]    (e);
        \coordinate[right=16mm of e]    (f);
        \coordinate[right=22mm of f]    (g);
        
        \draw[Line] (a) -- node[above] {}    (b) -- (f)
                        -- node[above] {}         (g) node[below left] {time};
        \draw[Brace] (b) -- node[above=3pt] {$l^{r-1}_{j\to k}$} (c);
        \draw[Brace] (e) -- node[above=3pt] {$l^r_{i \to k}$} (f);
        
        \draw[Brace=mirror] (b) -- node[below=3pt] {proposal duration} (e);
        \draw[Bracel=mirror] (c) -- node[below=20pt] {observed duration by $k$} (f);

        \draw[Line] ([ys=11mm] e) node[above] {block proposed by $i$} -- (e);
        \draw[Line] ([ys=11mm] b) node[above] {block proposed by $j$} -- (b);
        \draw[Line] ([ys=7mm]  c) node[above] {received by $k$} -- (c);
        \draw[Line] ([ys=7mm]  f) node[above] {received by $k$} -- (f);
    \end{tikzpicture}
    \caption{Round timeline for honest voting. $i$ is the current leader, $j$ is the previous leader, and $k$ is some validator observing $i$'s duration.
    $l^r_{i \to k}$ represents the realized latency in round $r$ between $i$ and $k$.}\label{fig:honest-voting}
\end{figure}

Note that some votes may be negative; due to latency, validator $k$ may receive the block from $i$ \textit{before} the block from $j$. If the $m\ts{th}$ smallest vote was negative, then the utility under $\sigma^*$ would not be well-defined, as $B$ is defined on the interval $[0, \tau]$. We first identify when this is not the case.

\begin{toappendix}
\section{Equilibria Results (\cref{sec:early-equilibria})} \label{apx:proofs-equilibria}
\end{toappendix}
\begin{lemmarep}\label{lm:positive_median}
    If $m > n - c$, then the $m\ts{th}$ smallest vote is always at least the round delay $\Delta$
    in the early-proposing truthful strategy profile $\sigma^*$.
\end{lemmarep}
\begin{proof}
    Let $i$ be an arbitrary current leader and $j$ the previous leader in round $r$. If a validator $k$ contributed to the quorum in $i$'s proposal, they must have started their round-$r$ clock before $i$ receives the quorum. Thus, if $i$ delays by $\Delta$, $k$'s honest vote would be above $\Delta$. Since $m > n -c$, the $m$th smallest vote will come from one of the above quorum-contributing validators, which means that the aggregated vote is at least $\Delta$.
\end{proof}

We assume $m > n-c$ from now on. 
This ensures that the aggregated timeliness vote will always come from a validator who contributed to the quorum.
We show that, under a certain design of the block reward function, 
$\sigma^*$ (the early-proposing truthful strategy profile) is a 
subgame-perfect Bayes-Nash equilibrium. The key condition is the following.
\begin{definition}[time-decreasing] \label{def:time-decreasing}
    We call the reward functions $\langle M, B \rangle$ time-decreasing
    if $M(x) + B(x)$ is a decreasing quantity in $x$. 
\end{definition}
The condition guarantees that the per-round reward is maximized by proposing at each validator's earliest 
possible time.
Now, we show that $\sigma^*$ is an equilibrium. Intuitively, if rewards are time-decreasing, delaying only hurts $i$'s utility. For voting, note that $i$ can only vote on other leader's proposals, and thus $i$'s voting strategy will not affect their own utility. 
\begin{theoremrep} \label{thm:latency-NE}
    Suppose $\langle M, B \rangle$ are time-decreasing and $m > n-c$. Then, the early-proposing truthful strategy profile $\sigma^*$ is a subgame-perfect Bayes-Nash equilibrium, where the time-averaged utility of $i$ is:
    \[
        U_i(\sigma^*) = 
        \frac{\sum_j \left( \E\left[M(S^{j, i})\right] + \E\left[ B(S^{j, i} + E^{i-j}_{m}) \right] \right)}
         {\sum_{j, k} \E[S^{j, k}]}
    \]
\end{theoremrep}
\begin{proof}
    Consider an arbitrary subgame: $i$ is the leader, and $j$ is the previous leader. We will show that $i$ never wants to propose a delay $\Delta \in (0, \Delta^*]$ or vote dishonestly.

    First, note that $i$'s own utility does not depend on their voting strategy, as their votes only determine the block reward when other validators are the leader. So, we focus on proposal delays.
    When $i$ is proposing their block, they have 
    observed the time they received the block and the time they 
    received the quorum. Since all other validators follow $\sigma^*$, $i$ knows that $j$ did not intentionally delay 
    their block. Thus, by subtracting these times, they can learn 
    $q_c^{j \to i}-l_{j \to i}$, where $q_c^{j \to i} \sim Q_c^{j 
    \to i}$ and $l_{j \to i} \sim L_{j \to i}$. Depending on the 
    latency distributions, this could reveal some information 
    about when validators have received block $j$ and started 
    their clock for $i$'s round. Let $\tilde{E}^{i - j}_{m}$ 
    represent the distribution of the $m\ts{th}$ latency 
    difference, conditional on $i$'s observations. Similarly, let $
    \tilde{S}^{j, i}$ represent the distribution of $i$'s start 
    time, conditional on their observations. 

    From \cref{lm:positive_median}, we know that $\Delta +\tilde{S}^{j, i} + \tilde{E}^{i - j}_{m} \ge \Delta > 0$. From the optimal-delaying oracle, we also know that $\Delta +\tilde{S}^{j, i} + \tilde{E}^{i - j}_{m} \le \tau$. Thus, 
    $B(\Delta +\tilde{S}^{j, i} + \tilde{E}^{i - j}_{m})$ is well defined. 
    Now, $i$'s expected reward for the current round for delay $\Delta$, given their observations, is:
    \begin{align*}
        \E[\textsc{Reward}_i^{(r)} \mid j,r-1] 
        &= \E\left[M(\Delta +\tilde{S}^{j, i}) + B(\Delta + \tilde{S} + \tilde{E}^{i-j}_{m}) \right] \\
        &= \E[M(\Delta + \tilde{S}^{j, i}) + B(\Delta+\tilde{S}^{j, i})] - b\tilde{E}_{m}^{i-j}
    \end{align*}
    Where $b$ is the block reward slope.
    
    Since $\langle M, B \rangle$ are time-decreasing, $\Delta = 0$ maximizes the expectation, regardless of the realization $s^{j, i} \sim S^{j,i}$, and thus maximizes the overall reward, regardless of $i$'s information. Thus, $\sigma^*$ is a subgame-perfect Bayes-Nash equilibrium.

    We apply the Renewal-Reward theorem to compute the time-averaged expected utility of $i$ in $\sigma^*$. That allows us to first compute the expected reward of $i$ in a single round, and then the expected duration of a round, and then divide. For the reward, note that the leaders are elected uniformly at random. Thus,
    consider the expected reward of $i$ in a given round:
    \begin{align*}
        \E[\textsc{Reward}_i] 
        &= \Pr[i = \leader]\sum_j \Pr[j = \leader_\text{prev}]
        \E[\textsc{Reward}_i \mid j = \leader_\text{prev}] \\
        &= \frac{1}{n}\sum_j \frac{1}{n} (\E[M(S^{j, i})] + \E\left[B(S^{j, i} + {E}^{i-j}_{m}) \right]) \\
        &= \frac{1}{n^2} \sum_j \E[M(S^{j, i})] + B\left(\E\left[S^{j, i}\right]\right) 
        -b\left(\E\left[{E}^{i-j}_{m}\right]\right)
    \end{align*}
    Now, note that since $\Delta = 0$, the expected duration of a round where $i$ leads and $j$ 
    led the previous round is simply $\E[S^{j, i}]$. Thus, the expected duration is 
    $\frac{1}{n^2}\sum_{i, j} \E[S^{j, i}]$. Dividing the expected reward by the expected 
    round duration yields the desired time-averaged utility.
\end{proof}

\ifdeferproofs
\subsection{Coalition resistance} \label{sec:coalition}
\else 
\section{Coalition resistance} \label{sec:coalition}
\fi
In this section, we argue that $\sigma^*$, the honest-voting and early-proposing equilibrium, will be resistant even to coalition deviations.
Our results depend on the size of the coalition, which we characterize below with $k = m + c -n$:

\begin{table}[H]
  \centering
  \captionsetup{position=bottom,skip=8pt}
  \setlength{\tabcolsep}{12pt}
  \renewcommand{\arraystretch}{1.3}
  \begin{tabular}{|l|c|c|c|}
    \hline
    \textbf{Coalitions} & \textbf{Small} & \textbf{Medium} & \textbf{Large} \\
    \hline
    \hline
    Coalition size & $< k$ & $[k, m)$ & $\geq m$ \\
    \hline
    Example ($c = m = 2n/3$) & $< n/3$ & $[n/3, 2n/3)$ & $\geq 2n/3$ \\
    \hline
  \end{tabular}
  \caption{Coalition size categories, where $k = m + c -n$.}
  \label{tab:coalition-sizes}
\end{table}

Suppose that coalitions vote $0$ for the duration of coalition leaders' proposals.\footnote{We prove that this is optimal for small coalitions, and assume it for large coalitions due to its intuitive appeal and because large coalitions may grow from small coalitions.} 
Consider a coalition leader that proposes with delay $\Delta$ in some round. For this proposal, the $m\ts{th}$ smallest vote comes from an honest, quorum-participating voter (and thus the vote is at least $\Delta$). For a large coalition, the aggregate always comes from a coalition member. Medium coalitions are less predictable. The aggregate will always come from an honest voter, but they may not have participated in the quorum (in fact, their vote could be arbitrarily small). 
\ifdeferproofs
\begin{toappendix}
    \ifdeferproofs
\section{Coalition Resistance Results (\cref{sec:coalition})}
\label{appendix:coalition_resistance}
\else
\fi
Here, we flesh out two of the arguments against coalition resistance that we sketched. To review, small coalitions have less than $k = m + c-n$ members and large coalitions have more than $m$ members. We show the following.

\begin{enumerate}
    \item For small whales and rational coalitions, the optimal strategy is to always vote $0$ on each other's proposals, and propose with $\Delta = 0$. 
    \item For large coalitions that vote $0$ on each other's proposals, if the average latencies are low relative to the timeout, then large coalitions would not form. 
\end{enumerate}

\subsection{Small coalitions}
Notice that the reported duration may be less than $\Delta$ if the coalition is too large. 
For $m = c = 2n/3$, suppose that $n/3$ validators got the block so late that they receive the next 
block before the current one (since $2n/3$ is needed for the quorum, the chain can advance without these validators). 
Then, if $n/3$ reports 0, the reported duration is going to be 0 as well, even though the leader took non-zero duration 
to propose the block. Formally, we have the following result.
\begin{lemmarep}\label{lm:small}
    Let $k = m + c -n$. If at least $n-k$ validators vote honestly, then for any round with delay $\Delta$, the $m\ts{th}$ smallest vote will be at least $\Delta$. If fewer than $n-k$ validators vote honestly, then the $m\ts{th}$ smallest vote may be less than $\Delta$.
\end{lemmarep}
\begin{proof}
    We count the leader's delay $\Delta$ from the earliest time they could have proposed, which is after they receive the quorum. At that point, at least $c$ validators have received the previous block, and will therefore observe an duration of at least $\Delta$. Since $k$ of these validators may dishonest, only $c-k = n-m$ are guaranteed to report a duration of at least $\Delta$. Thus, the $(n-m)\ts{th}$ largest vote (which is the $m\ts{th}$ smallest vote) is at least $\Delta$.

    If fewer validators voted honestly, then note that all non-quorum validators may receive the previous block after the current leader's block (and would thus observe a negative round time), and dishonest validators can similarly report a negative round time. In this case, the $m\ts{th}$ smallest vote would be less than $\Delta$ (in fact, it would be negative).
\end{proof}
This lemma is important because in small coalitions, delaying is always punished by an honest voter, while in larger coalitions that may not be the case. Now, we turn to whales, and show that small whales do not benefit from deviating, as they are already co-located. 
\begin{propositionrep} \label{thm:small-whale}
    Consider a whale of size at most $k$. The optimal strategy for the whale is the early-proposing honest strategy, $\sigma^*$.
\end{propositionrep}
\begin{proof}
    Recall that the time-averaged utility under $\sigma^*$ for node $i$ is:
    \begin{align*}
         \frac{\sum_j\E[M(S^{j, i})] + B\left(\E\left[S^{j, i}\right]\right) - b\E\left[E^{i-j}_{m}\right]}{\sum_{i, j} \E[S^{j, i}]}
    \end{align*}
    Notice that the denominator (the duration) cannot decrease, as in $\sigma^*$ everyone proposes as early as possible. So, the only way for the whale to benefit is to increase the numerator, which is their reward. We split into cases, based on whether the whale deviates by delaying their proposal. First, suppose that they don't delay, and propose at $\Delta = 0$. Then, they could only benefit by lowering their latency penalty ($\sum_j b\E[E^{i-j}_{m}]$) through dishonest voting. However, since a whale has no latency to its own nodes, the whale already votes $0$ for their rounds under the honest strategy. So, the whale can't lower their latency penalty.

    Now, suppose the whale proposed at $\Delta > 0$. From \autoref{lm:small}, we know that the $m\ts{th}$ smallest vote will be at least $\Delta$. Since $\langle M, B \rangle$ are time-decreasing, a proposal of $\Delta = 0$ would maximize the whale's utility. 
\end{proof}
We now consider rational coalitions, and show that while small rational coalitions can benefit from deviating (in their voting strategy), they still propose early. 
\begin{theoremrep} \label{thm:small-rational}
    Consider a rational coalition of size at most $k$. The optimal strategy for the coalition members is to always vote $0$ on each other's proposals (despite any latency they may observe), and propose at $\Delta = 0$.
\end{theoremrep}
\begin{proof}
    The previous proof illustrates why delaying one's proposal is suboptimal, no matter how the $k$ other coalition members vote. Now, given that everyone is still proposing at $\Delta = 0$, voting $0$ clearly reduces the latency penalty as much as possible.

    In particular, if everyone votes at $0$, the time-averaged utility of coalition member $i$ is almost the same as in the honest case, except for the latency penalty. That penalty was proportional to $\sum_{j}b\E[E^{i-j}_m]$ before, but now it is $\sum_j b\E[\tilde{E}^{i-j}_{m-k}]$, where $\tilde{E}^{i-j}_{m-k}$ is the $(m-k)\ts{th}$ order statistic of $\{L_{i \to k} - L_{j \to k}\}_{k \notin C}$, where $C$ is the coalition. In other words, the latency penalty now depends on the $(m-k)\ts{th}$ smallest latency among the $n-k$ honest players, instead of depending on the $m\ts{th}$ smallest latency among all honest players. In particular, this latency penalty in the coalition can never be larger than the latency penalty in the honest strategy. Further, from $i$'s perspective, the most valuable coalition is the one which features the nodes furthest from $i$ (i.e., those nodes which have highest latency to $i$).  
\end{proof}

\subsection{0-voting large coalitions}
What if everyone in the coalition had $m$ free votes? In other words, suppose the coalition is large ($|C| \ge m$), and they vote $0$ for coalition leaders.\footnote{We consider the restricted set of large coalitions that always vote $0$ for each other because this voting strategy is optimal for small coalitions, and thus may arise naturally if a coalition starts small and grows. There is also an incentive for the coalition to grow, as members' utilities rise as the size approaches $k$.} Is this large coalition profitable?

To rule this out, we identify when the timing game behaves similarly to a prisoner's dilemma. Consider two different strategy profiles: one in which everyone proposes early and one in which everyone proposes late but votes $0$. Our approach, which uses voting to ensure that the validators can all commit to proposing early, is robust to large coalitions only if the validators would prefer the early-proposal equilibrium to the late-proposal equilibrium. In terms of the prisoners dilemma, validators preferring the late-proposal equilibrium is equivalent to agents preferring the ``all defect'' equilibrium. 

\autoref{thm:large-coalition} characterizes when enough validators prefer the early-proposing equilibrium to make any large, late-proposing coalition infeasible. This is because enough validators would want to leave that the large coalition would become small (and thus their optimal strategy would be early-proposing). This matches a prisoner's dilemma; each agent would would individually prefer to propose late (defect) but, many agents would rather everyone propose early (cooperate) than late.

To introduce the theorem, we first define a ``low-latency'' setting. To do so, let $\E[\bar{S}^i] = \frac{1}{n}\sum_j \E[S^{j, i}]$ denote the expected starting time for validator $i$ and let $\E[\overline{E}^{i}_x] = \frac{1}{n}\sum_j \E[E^{i-j}_x]$ denote the expected $x^{th}$ smallest latency differential for validator $i$. 
\begin{definition}[$z$-low latency]
    For $z \in [0, 1]$, a set of validators and their latency distributions satisfies $z$-low-latency if there exist at most $k$ nodes $i$ such that $\E[\overline{S}^i +\overline{E}^{i}_{\min(c, m)}] > z\tau$.
\end{definition}

Note that $\E[\overline{S}^i +\overline{E}^{i}_{\min(c, m)}]$ is the average reported round duration when validator $i$ leads. Thus, the condition requires that $n-k$ validators are well-connected, in that their average reported round duration is a $z$ fraction of the timeout. 

\begin{theoremrep} \label{thm:large-coalition}
    Suppose a large coalition $C$ votes $0$ for leaders from the coalition and votes honestly for other leaders, while the rest of the validators use truthful strategy $\sigma^*$.
    If the system has $z$-low-latency with $z \le \frac{n-c}{3n-c}$, 
    then $|C| - k$ validators within the coalition would prefer defecting to the early-proposing truthful strategy.
\end{theoremrep}
\begin{proof}
    We start by bounding the utility of coalition members.
    Observe that if a coalition leader $i$ delayed by $\Delta$, 
    their reward for that round would be $M(S^{j,i} + \Delta) + b_0$.        
    Due to the optimal-delaying oracle, coalition leaders will know the time $\Delta^*$ 
    that is the latest time they can propose and not timeout, and their utility is clearly maximized at $\Delta^*$. 

    Now, note that $\Delta^*$ averaged over the previous leader can be derived from 
    $\overline{S}^i + \overline{E}_c^i + \Delta^* = \tau$
    where $\overline{E}^i_c = \frac{1}{n} \sum_{k \in [n]} E^{i - k}_c$.
    This means that 
    the $c\ts{th}$ slowest validator sees $i$'s block exactly when 
    their local clock is $\tau$. Thus, $\E[\Delta^* + S^i] = \tau 
    - \E[\overline{E}_c^i]$, which is the average round duration of coalition leader $i$. This means that their utility per-round is $M(\tau - 
    \E[\overline{E}_c^i]) + b_0 \le M(\tau) + b_0$. Thus, the time-averaged utility 
    of coalition member $i$ is at most:
    \begin{align*}
        \frac{\frac{1}{n}(M(\tau) + b_0)}{\frac{|C|}{n}\sum_{i \in C} \tau - \E[\overline{E}_c^i] + \frac{n - |C|}{n}\sum_{i \notin C} \E[\overline{S}^i]}
    \end{align*}
    We now bound the increase in utility. 
    \begin{claim}
        The average round duration is at least $\frac{|C|-k}{zn}$ times higher for the coalition than in the honest equilibrium.
    \end{claim}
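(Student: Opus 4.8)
The plan is to write each average round duration as an expectation over the uniformly sampled (leader, previous-leader) pair and then exploit the fact that the coalition's duration is exactly the honest duration \emph{plus} the extra delays injected by coalition leaders. Since leaders are sampled uniformly, the honest average round duration is $\overline{D}_h = \frac{1}{n}\sum_i \E[\overline{S}^i]$, whereas in the coalition scenario each coalition leader delays to $\Delta^*$ (average duration $\tau - \E[\overline{E}^i_c]$) and each non-coalition leader proposes early, giving $\overline{D}_c = \frac{1}{n}\sum_{i\in C}(\tau - \E[\overline{E}^i_c]) + \frac{1}{n}\sum_{i\notin C}\E[\overline{S}^i]$. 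Using the optimal-delaying identity $\tau = \E[\overline{S}^i] + \E[\overline{E}^i_c] + \E[\Delta^*_i]$ established just above the claim, the non-coalition terms cancel and these collapse into the clean decomposition $\overline{D}_c = \overline{D}_h + \frac{1}{n}\sum_{i\in C}\E[\Delta^*_i]$: the coalition simply adds the total maximal delay of its leaders on top of the honest baseline.

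Next I would lower bound the injected delay using $z$-low-latency. At most $k$ nodes are poorly connected, so at least $|C|-k$ coalition members are well connected, and for each such $i$ the definition gives $\E[\overline{S}^i + \overline{E}^i_{\min(c,m)}] \le z\tau$, whence $\E[\Delta^*_i] = \tau - \E[\overline{S}^i + \overline{E}^i_c] \ge (1-z)\tau$ (in the standard regime $c \le m$, where $\min(c,m)=c$). Discarding the remaining coalition members' non-negative delays yields $\frac{1}{n}\sum_{i\in C}\E[\Delta^*_i] \ge \frac{(|C|-k)(1-z)\tau}{n}$. Combined with the bound $\overline{D}_h \le z\tau$, the ratio becomes
\[
\frac{\overline{D}_c}{\overline{D}_h} = 1 + \frac{\frac{1}{n}\sum_{i\in C}\E[\Delta^*_i]}{\overline{D}_h} \ge 1 + \frac{(|C|-k)(1-z)}{zn},
\]
and the elementary inequality $1 \ge \frac{|C|-k}{n}$ (which holds since $|C| \le n$) upgrades the right side exactly to $\frac{|C|-k}{zn}$. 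I want to stress that \emph{keeping} the additive $\overline{D}_h$ term, rather than discarding it, is precisely what supplies the extra factor that cancels the $(1-z)$ loss; dropping it would only give the weaker ratio $\frac{(|C|-k)(1-z)}{zn}$.

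The main obstacle is justifying the upper bound $\overline{D}_h \le z\tau$. The $z$-low-latency definition controls the \emph{reported} round duration $\E[\overline{S}^i + \overline{E}^i_{\min(c,m)}]$, and only for $n-k$ nodes, whereas the renewal-reward denominator is the \emph{real} average duration $\frac{1}{n}\sum_i \E[\overline{S}^i]$ over all $n$ nodes. I therefore have to (i) pass from the reported to the real duration by arguing the latency-error statistic is non-negative in expectation (so that $\E[\overline{S}^i] \le \E[\overline{S}^i + \overline{E}^i_{\min(c,m)}]$), which is where the $\min(c,m)$ index and the low-latency regime are used, and (ii) absorb the at most $k$ poorly-connected nodes into the global average. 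Part (ii) is the delicate point: naively bounding a poorly-connected node's duration by $\tau$ is too lossy to close the argument, so one must argue their aggregate contribution is negligible, and I expect the slack in $1 \ge \frac{|C|-k}{n}$ together with the hypothesis $z \le \frac{n-c}{3n-c}$ to be what makes this absorption work (or, failing that, to enter in the subsequent utility comparison that the claim feeds into).
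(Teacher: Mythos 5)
Your skeleton is essentially the paper's own argument, organized more carefully. The paper likewise uses $z$-low-latency to say that at least $|C|-k$ coalition members inflate their round duration from at most $z\tau - \E[\overline{E}^i_c]$ to $\tau - \E[\overline{E}^i_c]$ (your $\E[\Delta^*_i] \ge (1-z)\tau$), and then simply asserts that, since no validator takes less time under the coalition, the total duration rises by a factor of at least $\frac{|C|-k}{nz}$. Your additive decomposition $\overline{D}_c = \overline{D}_h + \frac{1}{n}\sum_{i\in C}\E[\Delta^*_i]$, together with the observation that $1 \ge \frac{|C|-k}{n}$ upgrades $1 + \frac{(|C|-k)(1-z)}{zn}$ to $\frac{|C|-k}{zn}$, is cleaner bookkeeping than the paper's one-line leap. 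Notably, the step you flag as the main obstacle---$\overline{D}_h \le z\tau$---is exactly what the paper assumes without comment: its jump from per-member multiplicative increases to a bound on the \emph{total} only goes through if the fast members carry at least a $\frac{|C|-k}{n}$ share of the honest total duration, which is equivalent in strength to the bound you could not prove (and is dubious precisely because the low-latency members are the \emph{fast} ones).

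Two caveats about your plan to close that gap, which is where the genuine problems lie. First, per-node nonnegativity of the error statistic is false: the paper's own cluster computation (\cref{lem:e-cluster}) gives $\E[\overline{E}^x_m] = \bY(\eps - l) < 0$ for nodes in the large cluster, so $\E[\overline{S}^i] \le \E[\overline{S}^i + \overline{E}^i_{\min(c,m)}]$ fails for exactly the well-connected nodes. What is salvageable is the network-wide average: since $E^{i-j}_m + E^{j-i}_m \ge 0$ whenever $m \ge n+1-m$ (the $m\ts{th}$ order statistic of $\{d_k\}$ dominates that of $\{-d_k\}$), one gets $\sum_{i,j}\E[E^{i-j}_m] \ge 0$, hence average real duration $\le$ average reported duration---but this needs $m \ge (n+1)/2$, which holds for the recommended $m=c=2n/3$ yet not for every $m > n-c$. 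Second, the at most $k$ uncontrolled nodes cannot be fully absorbed: their reported durations are bounded only by $\tau$ (via $\Delta^* \ge 0$), yielding $\overline{D}_h \le z\tau + \frac{k}{n}(1-z)\tau$, so the literal constant $\frac{|C|-k}{zn}$ survives only up to this additive slack (for small $z$ and $k = \Theta(n)$ it can fail outright). Since the paper's proof silently elides both points, your attempt is no less rigorous than the original; but as submitted it contains the acknowledged hole, and the specific sub-claim you propose to fill it with (per-node $\E[\overline{E}^i] \ge 0$) is the wrong one.
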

    \begin{nestedproof}
        From $z$-low-latency, we know that at least $|C|-k$ validators in the coalition have $\E[S^i] \le z\tau - \E[\overline{E}_c^i]$. These validators will increase their round duration from at most $z\tau - \E[\overline{E}_c^i]$ to $\tau - \E[\overline{E}_c^i]$, which is an multiplicative increase of more than $z$. Since no validator takes less time under the coalition, the total round duration increases by a multiplicative factor of at least $\frac{|C|-k}{nz}$.
    \end{nestedproof}    
    \begin{claim}
        For any coalition $C$, there exists at least $|C|-k$ members whose reward when leading a round in the coalition is at most $\frac{2}{1-z}$ times as high as in the honest equilibrium.
    \end{claim}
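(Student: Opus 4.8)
The plan is to establish the bound only for the low-latency members of $C$, and to count them first. By $z$-low-latency, at most $k$ validators $i$ have $\E[\overline{S}^i + \overline{E}^i_{\min(c,m)}] > z\tau$, so at most $k$ members of $C$ can be high-latency, and at least $|C|-k$ members satisfy $\E[\overline{S}^i + \overline{E}^i_{\min(c,m)}] \le z\tau$. These are exactly the members named in the claim; I will prove the reward-ratio bound for each of them, and the count then follows.

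Fix such a low-latency member $i$. The preceding part of the proof already records that $i$'s average per-round reward inside the coalition is at most $M(\tau) + b_0$ (it proposes at the timeout and self-votes $0$). The crucial step is to control the lone MEV term $M(\tau)$. Since $\langle M, B\rangle$ is time-decreasing (\cref{def:time-decreasing}) with $B(\tau)=0$ and $B(0)=b_0$, we get $M(\tau) = M(\tau)+B(\tau) \le M(0)+B(0) = M(0)+b_0$, so the coalition reward is at most $M(0) + 2b_0$. For the honest reward of $i$, I use the per-round expression behind \cref{thm:latency-NE}, namely $\frac{1}{n}\sum_j \E[M(S^{j,i})] + B\big(\E[\overline{S}^i + \overline{E}^i_{\min(c,m)}]\big)$, whose second term is $B$ evaluated at the average reported round duration. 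The MEV term is at least $M(0)$ by monotonicity and $S^{j,i}\ge 0$, and writing $B(x)=b(\tau-x)$ the low-latency bound $\E[\overline{S}^i + \overline{E}^i_{\min(c,m)}] \le z\tau$ yields a block reward of at least $b(\tau - z\tau) = b_0(1-z)$. Hence $i$'s honest reward is at least $M(0)+b_0(1-z)$.

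Combining, the reward ratio for each such $i$ is at most $\frac{M(0)+2b_0}{M(0)+b_0(1-z)}$, and a one-line cross-multiplication closes the argument: $\frac{2}{1-z} - \frac{M(0)+2b_0}{M(0)+b_0(1-z)}$ has numerator $M(0)(1+z)\ge 0$ over a positive denominator, so the ratio is at most $\frac{2}{1-z}$ (and the bound is sharp as $M(0)\to 0$). Chaining this with the preceding duration claim and $|C|-k \ge n-c$ (from $|C|\ge m$ and $k=m+c-n$) is what calibrates the constant $\frac{2}{1-z}$ to the threshold $z \le \frac{n-c}{3n-c}$.

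The main obstacle is precisely the MEV term $M(\tau)$: because the coalition proposes at the timeout it extracts the maximal MEV, and for a steep MEV curve the reward ratio would be unbounded, breaking both the claim and the theorem. The key insight is that the time-decreasing condition is exactly what caps $M(\tau)$ by $M(0)+b_0$, so that the per-round reward advantage of the coalition stays bounded while the duration blows up, letting the duration growth dominate for at least $|C|-k$ members.
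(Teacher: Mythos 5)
Your proof is correct and follows essentially the same route as the paper's: count the $|C|-k$ low-latency members via the $z$-low-latency condition, bound the coalition reward by $M(\tau)+b_0$ with $M(\tau)$ capped through the time-decreasing condition, and lower-bound the honest reward by $B(z\tau)=(1-z)b_0$. The one difference is a small improvement: where the paper simply sets $M(0)=0$ (``Letting $B(\tau)=M(0)=0$, we get $M(\tau)\le b_0$''), you carry $M(0)$ through both bounds and verify $\frac{M(0)+2b_0}{M(0)+b_0(1-z)}\le\frac{2}{1-z}$ by cross-multiplication (numerator $M(0)(1+z)\ge 0$), so the stated constant holds for any positive increasing $M$, including the paper's own fairness-analysis form $M(x)=\mu x+\mu_0$ with $\mu_0>0$.
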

    \begin{nestedproof}
        In the honest equilibrium, the reward for validator $i$ when leading is $\E[M(S^i)] + B(\E[S^i]+\E[\overline{E}^i_m])$, which is highest for nodes with lowest $\E[S^i] + \E[\overline{E}^i_m]$. From $z$-low-latency, we know that at least $|C|-k$ validators in the coalition have $\E[S^i+\overline{E}^{i}_m] \le z\tau$. For these validators, their reward in the honest equilibrium is at least $B(\E[S^i]+\E[\overline{E}^i_m]) \ge B(z\tau)$. Since $B$ is linear and $B(\tau) = 0$, $B(z\tau) = (1-z)b_0$.

        In the coalition, all validators get reward $b_0 + M(\tau)$. Since $\langle M, B \rangle$ is time-decreasing, we know that $M(\tau) + B(\tau) \le M(0) + b_0$. Letting $B(\tau) = M(0) = 0$, we get $M(\tau) \le b_0$. Thus, the validators get reward at most $2b_0$ in the coalition. Dividing these terms results in a multiplicative gain of at most $\frac{2}{1-z}$ for $|C|-k$ validators in the coalition. 
    \end{nestedproof}
    Putting the two claims together, we get that at least $|C|-k$ members in the coalition would prefer to leave the coalition if $\frac{|C|-k}{nz} \ge \frac{2}{1-z}$, which simplifies to $z \le \frac{|C|-k}{2n+|C|-k}$. The right hand size is minimized when $|C| = m$. Since $m-k = n-c$, if $z \le \frac{n-c}{3n-c}$, $|C|-k$ members of the coalition prefer the honest equilibrium to the coalition.
\end{proof}
Under the conditions in the theorem, we see that 0-voting large coalitions that propose late become unsustainable. 
$|C|-k$ can ``defect'' from the coalition, and if they do so, then the coalition 
becomes small ($|C| \le k$), and thus will not compromise responsiveness (as well as facing practical barriers to adoption due to a multiplicity of coalitions).

To ensure that the conditions of the theorem hold in practice, there are multiple options.
Note that the ``low-latency'' condition requires that latencies are bounded away from $z\tau$. Thus, one can 
increase $\tau$ to make this condition hold. Intuitively, this makes the late-proposing equilibrium worse by decreasing the speed of the system. 
However, increasing $\tau$ would require higher block rewards. This is because $B(\tau)=0$, and to ensure $\langle M, B\rangle$ is time-decreasing, the slope of $B$ must not change; thus, $B(0)$ must increase.
 
Another option is to uniformly increase the block reward function (i.e., shift it up).
One reason that validators may prefer the late-proposing equilibrium is that they always get rewarded as if their rounds had zero duration, as opposed to the reward of $B(\E[S^i]+\E[\overline{E}^i_m])$ in the honest equilibrium. Increasing the block reward function will thus decreases the multiplicative advantage that occurs from gaining $b_0$ in the late-proposing equilibrium. 

\end{toappendix}
\else
\begin{toappendix}
\section{Coalition Resistance Results (\cref{sec:coalition})}
\label{appendix:coalition_resistance}
\end{toappendix}
\fi
We have three arguments for coalition resistance. 
\begin{enumerate}[(A)]
    \item
    Small coalitions will never compromise responsiveness:
    \begin{theorem}\label{thm:small-informal}
        For small whales and rational coalitions, the optimal strategy is to always vote $0$ on each other's proposals, and propose at $\Delta = 0$.
    \end{theorem}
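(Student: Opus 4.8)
The plan is to handle the two coalition types separately but to drive both with a single structural observation, invoking Proposition~\ref{thm:small-whale} for whales and Theorem~\ref{thm:small-rational} for rational coalitions.

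First I would argue that delaying is never profitable for any small coalition, regardless of how its members vote. The time-averaged utility has the total round duration in its denominator, and under $\sigma^*$ this duration is already minimal since every leader proposes at its earliest possible time; any delay $\Delta > 0$ by a coalition leader can only increase it. At the same time, because the coalition has at most $k = m + c - n$ members, Lemma~\ref{lm:small} guarantees that at least $n - k$ validators vote honestly, so the aggregated timeliness vote is at least $\Delta$. Since $\langle M, B\rangle$ is time-decreasing (\cref{def:time-decreasing}), the per-round reward $M(\Delta + S^{j,i}) + B(\Delta + S^{j,i} + E^{i-j}_m)$ is then maximized at $\Delta = 0$, so neither the numerator nor the denominator benefits from a positive delay.

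Next I would turn to voting. A validator's own votes only ever enter other leaders' block rewards, so the only way voting can raise one's own utility is by lowering the latency-penalty term $\sum_j b\,\E[E^{i-j}_m]$ incurred when that validator leads. For a whale, co-location ($L_{i,j}=0$ for $i,j \in C$) means it already reports $0$ for its own proposals under $\sigma^*$, so there is nothing left to gain and $\sigma^*$ is optimal outright. For a rational coalition, I would show that voting $0$ on each other's proposals strictly lowers this penalty by shifting the relevant order statistic from the $m$th smallest latency differential among all $n$ reports to the $(m-k)$th smallest among the $n-k$ non-coalition reports, a quantity that can never be larger.

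The hard part will be tracking exactly how the coalitional $0$-votes change which order statistic determines the aggregate, and confirming that the size bound $|C| \le k$ lets the coalition lower its block-reward penalty without ever letting a delayed proposal slip past detection. This is precisely where Lemma~\ref{lm:small} does the work: $k$ is the sharp threshold below which an honest, quorum-participating voter always anchors the aggregate at or above $\Delta$, which cleanly decouples the proposal incentive (never delay) from the voting incentive (vote $0$ internally), yielding the claimed optimal strategy for both coalition types.
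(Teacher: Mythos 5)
Your proposal is correct and follows essentially the same route as the paper's own proofs: you rule out delays for any coalition of size at most $k$ via \cref{lm:small} and the time-decreasing property, dispose of whales by noting co-location already makes their honest self-votes $0$ (as in \cref{thm:small-whale}), and for rational coalitions you identify the same shift of the latency penalty from the $m$th order statistic over all honest reports to the $(m-k)$th order statistic over the $n-k$ non-coalition reports (as in \cref{thm:small-rational}). There is no substantive difference in decomposition or key lemmas.
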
 
    \item We argue \textit{informally} that small and medium rational coalitions are unstable. Suppose multiple, overlapping coalitions could form, as is true in practice. If validator $i$ acquires $m$ ``free'' votes, then they would propose late, as they are guaranteed the maximum block reward regardless. Further, other validators may want to enter into a $0$-voting pact with $i$, as doing so will increase their utility by decreasing their own latency penalty. They would only reject this pact if they believed that $i$ would have $m$ free votes, and thus delay. Estimating this probability would be challenging in practice, rendering small and medium coalition formation highly unpredictable. Taken together, we believe that small or medium rational coalitions will not form in practice, as coalition members will have an incentive to join additional coalitions.

    Note that this argument assumes that rational coalitions can neither (i) detect and penalize coalition-defectors, nor (ii) identify surreptitious participation in other coalitions.
    In practice, small coalitions could detect if one of their members proposes much later than they should (due to their secret membership in other coalitions). However, that member can instead make a smaller delay which could be difficult to distinguish from latency fluctuations. This delay would still lead to them benefiting at the cost of the rest of the network, and thus the same qualitative issues arise.  
    Coalition capabilities and formation dynamics are complex~\cite{kelkar2025breaking,RAY2015239,ray2007game,shenoy1979coalition}; we defer the meta-game of coalition formation to future work.
    \item Suppose that small rational coalitions do grow into large coalitions. Since the optimal voting strategy for small coalitions is to vote $0$ on coalition leaders, we consider large coalitions that follow this same strategy. We also introduce a condition called $z$-low-latency, which (roughly) requires that at most $k$ validators receive aggregated timeliness votes greater than $z\tau$ in the honest equilibrium (where $z\in [0, 1]$). We believe that this is a natural condition, as it would be satisfied if the timeout is set high enough to account for large latency fluctuations. Using this, we show the following theorem. 
    \begin{theorem}
        If the system has $z$-low-latency with $z \le \frac{n-c}{3n-c}$, 
        then $|C| - k$ validators within the coalition would prefer defecting to the early-proposing truthful strategy.
    \end{theorem}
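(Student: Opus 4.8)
The plan is to compare, member by member, the time-averaged utility a validator earns inside the coalition against the utility it earns in the all-honest equilibrium $\sigma^*$, using the Renewal--Reward theorem to write each as (expected per-round reward)$/$(expected round duration) and showing the coalition ratio is strictly smaller for $|C|-k$ members. The first step is to nail down how coalition leaders behave. Since $|C|\ge m$ and the coalition votes $0$ on its own leaders, the $m\ts{th}$ smallest vote for every coalition-led round is $0$, so such a leader collects the maximal block reward $B(0)=b_0$ no matter how long it delays. The block reward is now constant in the delay, so the time-decreasing property no longer restrains the leader; only MEV matters, and since MEV increases in the delay, the optimal-delaying oracle pushes it to $\Delta^*$. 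Averaging over the previous leader, the identity $\overline{S}^i+\overline{E}^i_c+\Delta^*=\tau$ shows a coalition-led round lasts $\tau-\overline{E}^i_c$ in expectation and pays $M(\tau-\overline{E}^i_c)+b_0\le M(\tau)+b_0$.

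Next I would bound the numerator and denominator of the utility ratio separately, using $z$-low-latency to isolate the well-connected coalition members: at most $k$ validators overall violate the low-latency bound, hence at most $k$ in $C$, leaving at least $|C|-k$ members $i$ with $\E[\overline{S}^i+\overline{E}^i_{\min(c,m)}]\le z\tau$. Two claims drive the comparison. The \emph{duration claim}: the total expected round duration under the coalition is at least a $\tfrac{|C|-k}{zn}$ factor larger than under $\sigma^*$, because each well-connected member's duration, at most $z\tau$ honestly, blows up to roughly $\tau$ once it delays maximally (a per-member factor of at least $1/z$), and there are $\ge|C|-k$ such members each contributing a $1/n$ share. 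The \emph{reward claim}: for these same members the per-round reward grows by at most $\tfrac{2}{1-z}$, since in $\sigma^*$ their short rounds earn block reward at least $B(z\tau)=(1-z)b_0$ (by \cref{thm:latency-NE} and linearity of $B$), while in the coalition they earn at most $2b_0$, using time-decreasing ($M(\tau)+B(\tau)\le M(0)+b_0$ with $B(\tau)=0$) to get $M(\tau)\le b_0$.

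Combining the two claims finishes the argument. A well-connected member strictly prefers $\sigma^*$ exactly when the reward inflation is dominated by the duration inflation, i.e. $\tfrac{2}{1-z}\le\tfrac{|C|-k}{nz}$, which rearranges to $z\le\tfrac{|C|-k}{2n+|C|-k}$. The right-hand side is increasing in $|C|$, so it is smallest at the minimal large-coalition size $|C|=m$; substituting $m-k=n-c$ gives the stated threshold $z\le\tfrac{n-c}{3n-c}$, under which all $|C|-k$ well-connected members defect for every $|C|\ge m$.

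I expect the duration claim to be the main obstacle. The $z$-low-latency hypothesis controls only $n-k$ validators, so the $\le k$ high-latency exceptions can inflate the honest-equilibrium denominator and erode the $1/z$ factor; care is needed to argue the blow-up is driven specifically by the well-connected coalition members and to keep the error terms from spoiling either bound. In particular, the hypothesis is phrased with $\overline{E}^i_{\min(c,m)}$ whereas the coalition duration involves $\overline{E}^i_c$, so I would track that discrepancy through the inequalities; and the upper bound $M(\tau-\overline{E}^i_c)\le M(\tau)$ relies on $\overline{E}^i_c\ge 0$, which need not hold for a well-connected leader preceded by a poorly connected one. I would either verify nonnegativity from the order-statistic definition under the low-latency regime or absorb a small correction into the constants so the final comparison survives.
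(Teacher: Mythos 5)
Your proposal is correct and follows essentially the same route as the paper's proof: the same reduction of coalition leaders to maximal delay via $\overline{S}^i + \overline{E}^i_c + \Delta^* = \tau$, the same two claims (total duration inflation by a factor of at least $\frac{|C|-k}{zn}$, and per-round reward inflation of at most $\frac{2}{1-z}$ via $B(z\tau)=(1-z)b_0$ and $M(\tau)\le b_0$ from time-decreasingness), and the same final step of minimizing $\frac{|C|-k}{2n+|C|-k}$ at $|C|=m$ with $m-k=n-c$. The caveats you flag at the end --- the mismatch between the $\overline{E}^i_{\min(c,m)}$ in the $z$-low-latency hypothesis and the $\overline{E}^i_c$ and $\overline{E}^i_m$ used in the two claims, and the possible negativity of $\overline{E}^i_c$ needed for the $1/z$ duration factor and the bound $M(\tau-\overline{E}^i_c)\le M(\tau)$ --- are genuine loose ends that the paper's own proof also passes over silently (they disappear under the standard choice $c=m$), so identifying them strengthens rather than weakens your write-up.
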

    Thus, large, low latency, $0$-voting coalitions will not form, because a large fraction of the coalition would prefer to return to the honest equilibrium.
\end{enumerate}
\ifdeferproofs
\else
We defer full results for arguments (A) and (C) to \cref{appendix:coalition_resistance}.
\fi

Our coalition resistance results are summarized in \autoref{tab:coalition-args}. While we do not have arguments against medium or large whales, such a whale must control a very large fraction of the total stake, and is thus unlikely to exist or violate the assumptions of the underlying BFT protocol itself.
\begin{table}[H]
  \centering
  \captionsetup{position=bottom,skip=8pt}
  \setlength{\tabcolsep}{10pt}
  \renewcommand{\arraystretch}{1.2}
  \begin{tabular}{@{} l cc @{}}
    \toprule
    & \textbf{Whale} & \textbf{Rational} \\
    \midrule
    \textbf{Small}  & (A)        & (A), (B) \\
    \textbf{Medium} & \textemdash & (B)      \\
    \textbf{Large}  & \textemdash & (C)      \\
    \bottomrule
  \end{tabular}
  \caption{Cells show which arguments apply to which sizes and types of coalitions.}
  \label{tab:coalition-args}
\end{table}
\ifdeferproofs
\else

\fi

\section{Fairness} \label{sec:fairness}
Prior works (e.g., \cite{PS17}) defined a blockchain as ``fair'' (among validators) when the total welfare is distributed proportionally to stake. 
One of the main relevant obstacles to fairness is latency---with time-varying block reward, the utility of a validator decreases as it experiences higher latency to other validators.
Note that latency-based unfairness already exists with static block rewards.
Validators with lower latency can propose closer to the timeout while still getting their blocks approved, resulting in more MEV than validators with higher latency. Throughout our analysis of fairness, we assume that validators are in the early-proposing and honest-voting equilibrium with dynamic block rewards, and in the late-proposing equilibrium with static block rewards.

We will capture \emph{unfairness} as the utility ratio between the player with highest utility and the lowest utility.
At best, this ratio is 1 and the equilibrium is completely fair. As we will see in this section, the ratio is often less than 1.
\begin{definition}[latency advantage]
    Let $U_i(\vec{\sigma})$ be the utility of player $i$ at some fixed strategy profile $\vec{\sigma}$.
    \[
    \textsc{Advantage}(\vec{\sigma}) 
    = \frac{\max_i U_i(\vec{\sigma})}
    {\min_i U_i(\vec{\sigma})} - 1
    \]
\end{definition}

\subsubsection{Closed-form results.}
We first analyze fairness in two simple network topologies with deterministic latencies to gain intuition about fairness under dynamic block rewards.
In the \emph{line model}, $n$ validators are spaced evenly on a line, and the latency between two validators is proportional to their distance.
In the \emph{cluster model}, $n$ validators are placed into two clusters, and the latency between two validators is low if they are in the same cluster and high otherwise. 
The following informal theorem summarizes results from 
\cref{thm:fairness-line,thm:fairness-cluster,thm:fairness-late}.
\begin{theorem}[informal]
In the line model and the cluster model, with deterministic latencies and
reasonable parameters, fairness is worse under dynamic block rewards than static block rewards.
\end{theorem}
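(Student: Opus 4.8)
The plan is to reduce $\textsc{Advantage}$ to a ratio of per-round leader rewards in each regime, then compute that ratio explicitly for the two deterministic topologies and compare. The key initial observation is that, since leaders are sampled uniformly, the renewal-reward denominator---the overall average round duration $\frac{1}{n^2}\sum_{j,k}\E[S^{j,k}]$ under dynamic rewards, and $\frac{1}{n^2}\sum_{j,k}(\tau - \E[E^{k-j}_c])$ under static (late-proposing) rewards---is \emph{identical} for every validator $i$. Hence $U_i$ is proportional to the expected per-round reward $N_i$ that $i$ earns when leading, and $\textsc{Advantage} = \max_i N_i / \min_i N_i - 1$ in each regime. This isolates the whole comparison to the numerators.

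Next I would write $N_i$ in closed form. Under dynamic rewards (early proposing, via \cref{thm:latency-NE} with $M(x)=\mu x+\mu_0$ and $B(x)=b_0-bx$), $N_i^{\mathrm{dyn}} = \sum_j\big[(\mu-b)S^{j,i} + \mu_0 + b_0 - b\,E^{i-j}_m\big]$. Under static rewards in the late-proposing equilibrium, each leader delays to $\Delta^* = \tau - S^{j,i} - E^{i-j}_c$, so the round reward is $M(\tau - E^{i-j}_c) + b_0$ and $N_i^{\mathrm{stat}} = \sum_j\big[\mu\tau + \mu_0 + b_0 - \mu\,E^{i-j}_c\big]$. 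Because latencies are deterministic in the line and cluster models, every order statistic $S^{j,i}$, $E^{i-j}_m$, $E^{i-j}_c$ is an explicit function of node positions, which I would tabulate: on the line from the spacings $|p_i - p_k|$, and in the cluster model from the two intra-/inter-cluster latency values (taking unequal cluster sizes, since symmetric clusters force $\textsc{Advantage}=0$ in both regimes). A monotonicity/symmetry argument then identifies the extremal validators---the central node is best-connected and the endpoints worst on the line, and nodes in the larger cluster dominate---so $\max_i N_i$ and $\min_i N_i$ occur at known indices. Note the time-decreasing hypothesis forces $b>\mu$, so both $(\mu-b)S^{j,i}$ and $-b\,E^{i-j}_m$ reward good connectivity, confirming the same validator is extremal in each term.

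With the extremal rewards in hand, the heart of the proof is the ratio inequality $N^{\mathrm{dyn}}_{\mathrm{best}}/N^{\mathrm{dyn}}_{\mathrm{worst}} > N^{\mathrm{stat}}_{\mathrm{best}}/N^{\mathrm{stat}}_{\mathrm{worst}}$, which I expect to be the \emph{main obstacle}, since both ratios sit just above $1$ and their baselines and penalty terms have different structure. The intuition I would formalize is that the static per-leader reward carries a large latency-independent MEV windfall $\mu\tau$, whereas the dynamic per-leader baseline is the much smaller $\mu_0+b_0+(\mu-b)S^{j,i}$; a fixed absolute latency gap therefore occupies a larger fraction of the dynamic reward, inflating its ratio. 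Concretely, writing each ratio as $(A-x)/(A-y)$ with $x<y$ the best/worst latency penalties and $A$ the shared baseline, the ratio increases as $A$ shrinks and as $y-x$ grows; I would show the dynamic regime has both a smaller $A$ and, via the extra block-reward penalty $b\,E_m$, a comparable-or-larger gap, under the stated ``reasonable parameters.'' The final step is to verify, for each topology, that the explicit geometric values satisfy this inequality---a routine but parameter-sensitive calculation---and to record the precise regime (time-decreasing rewards plus latencies bounded suitably relative to $\tau$) under which \cref{thm:fairness-line,thm:fairness-cluster} hold, with \cref{thm:fairness-late} supplying the static baseline used throughout.
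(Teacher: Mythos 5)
Your proposal follows essentially the same route as the paper: you cancel the shared round-duration denominator so that $\textsc{Advantage}$ reduces to a ratio of per-round leader rewards (your dynamic expression $\sum_j[(\mu-b)S^{j,i}+\mu_0+b_0-b\,E^{i-j}_m]$ and static expression $\sum_j[\mu\tau+\mu_0+b_0-\mu\,E^{i-j}_c]$ match \cref{thm:latency-NE} and \cref{lem:late-ratio} exactly), compute the deterministic order statistics at the extremal nodes (center versus endpoint on the line, larger versus smaller cluster), and arrive at the closed-form advantages of \cref{thm:fairness-line,thm:fairness-cluster,thm:fairness-late}. The only cosmetic difference is in the final comparison: the paper verifies it numerically with Ethereum-inspired parameters (\cref{fig:fairness-plots}), whereas you sketch the same parameter-sensitive check via an $(A-x)/(A-y)$ monotonicity framing — a correct and slightly more structured way to see why removing the latency-independent $\mu\tau$ windfall inflates the dynamic ratio.
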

We defer full results to \cref{apx:fairness}.
The theorem suggests that shifting the proposal times earlier has a fairness cost, although the cost remains small in the scenarios we analyze. For instance, if we plug in reasonable parameters inspired by current Ethereum settings, the unfairness (i.e., the percentage increase between the lowest utility player and the highest utility player) remains within a couple percentage points of the static reward setting, in both the line and cluster models (see \cref{fig:fairness-plots}).

\subsubsection{Simulation details.}
We also simulate both the dynamic and static block reward settings in network topologies inspired by real-world latencies.
On a world latency model, where pairwise latencies are sampled from a lognormal, we observe that the unfairness under dynamic rewards is only slightly worse than under static rewards (see \cref{fig:fairness-world}).
We defer full details to \cref{appendix:sim}.
\begin{figure}[h]
    \centering
    \includegraphics[scale=0.75]{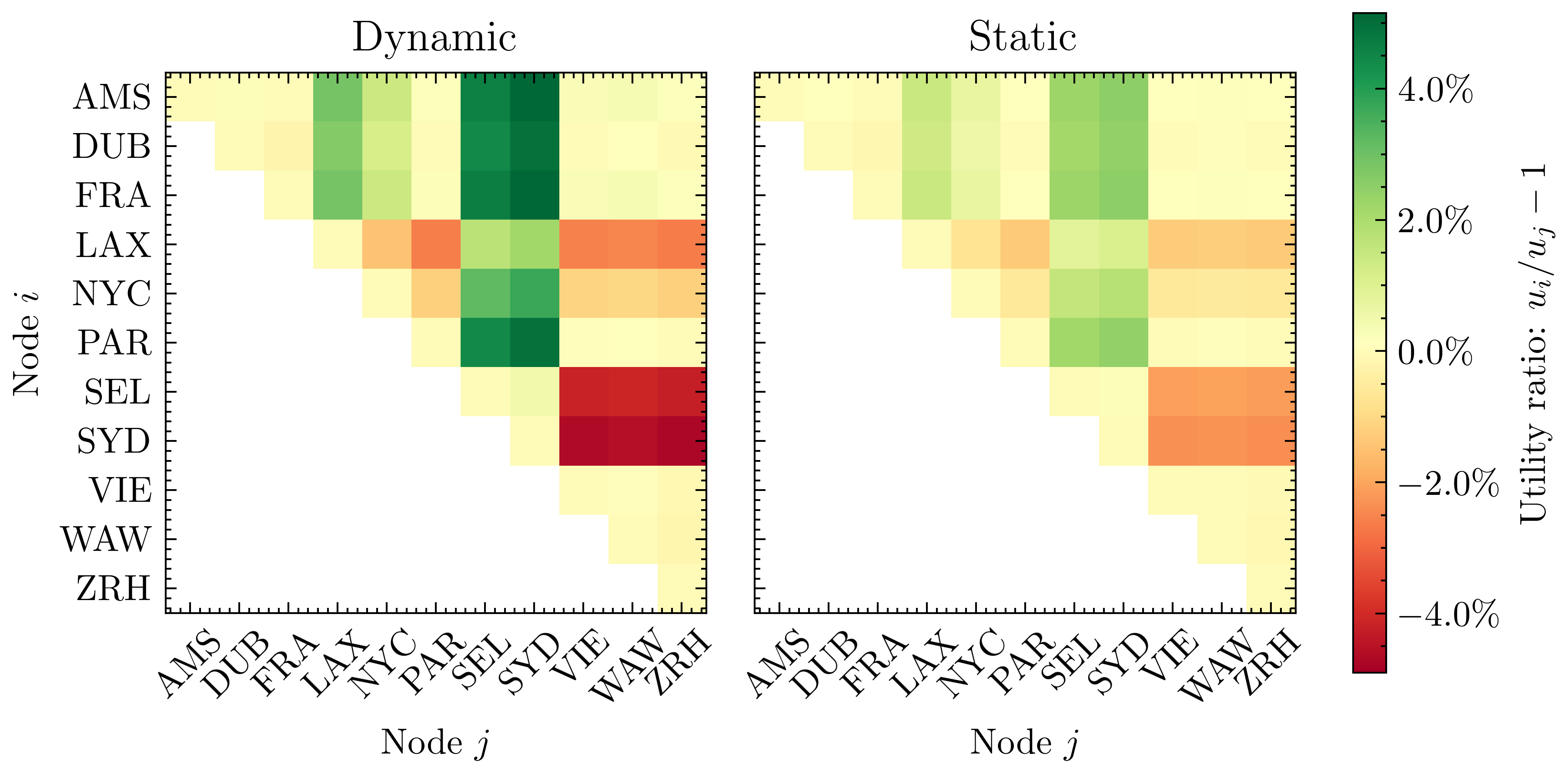}
    \caption{Simulation of 100 nodes placed in cities around the world, where
    the latency between $i$ and $j$ is log-normal with mean as average ping
    latencies. 
    }
    \label{fig:fairness-world}
\end{figure}

\section{Guidance for practice} \label{sec:reward}
We now discuss some details that may be relevant for implementing our work. 

\subsubsection{Empirical feasibility of time-decreasing rewards.}
Due to our requirements on total reward, observe that the maximum block reward needs to be greater than the accumulated MEV at timeout.
\begin{observation} \label{lem:block-reward-needs-to-be-high}
    For MEV $M(x) = \mu x + \mu_0$ and block reward $B(x) = b_0 - bx$ where $B(\tau) = 0$,
    $\langle M , B \rangle$ is time-decreasing (\cref{def:time-decreasing}), if and only if 
    $b_0 > \mu \tau$.
\end{observation}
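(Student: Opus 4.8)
The plan is to reduce the time-decreasing condition to a statement about the slope of a single affine function, and then to use the normalization $B(\tau)=0$ to convert that slope condition into the claimed inequality. First I would simply add the two reward functions:
\[
M(x)+B(x) = (\mu x + \mu_0) + (b_0 - bx) = (\mu - b)x + (\mu_0 + b_0),
\]
which is affine in $x$. An affine function $a x + c$ is (strictly) decreasing exactly when its slope satisfies $a < 0$, so by \cref{def:time-decreasing}, $\langle M, B\rangle$ is time-decreasing if and only if $\mu - b < 0$, i.e. $b > \mu$.

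Next I would invoke the normalization constraint $B(\tau) = 0$, which the model imposes so that the block reward vanishes precisely at the timeout. This gives $b_0 - b\tau = 0$, hence $b = b_0/\tau$ (using $\tau > 0$). Substituting into $b > \mu$ yields $b_0/\tau > \mu$, which rearranges to $b_0 > \mu\tau$. Since every step in this chain is an equivalence, this establishes both directions of the ``if and only if'' at once, with no need to argue the two implications separately.

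There is no substantive obstacle here; the proof is a one-line computation, and the only points requiring care are bookkeeping. I would (i) match the strictness in ``decreasing'' from \cref{def:time-decreasing} to the strict inequality $b_0 > \mu\tau$ in the statement, noting that a merely \emph{weakly} decreasing condition would instead yield $b_0 \ge \mu\tau$; and (ii) record the standing assumptions $\tau > 0$ and $b \ge 0$, so that dividing by $\tau$ is legitimate and preserves the direction of the inequality. With these in place, the equivalence follows immediately.
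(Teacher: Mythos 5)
Your proof is correct and takes essentially the same approach the paper intends: the paper leaves the observation unproved precisely because it reduces to noting that $M(x)+B(x)=(\mu-b)x+(\mu_0+b_0)$ is decreasing iff $b>\mu$, which combined with $b_0=b\tau$ (from $B(\tau)=0$, $\tau>0$) is equivalent to $b_0>\mu\tau$. Your bookkeeping remarks on strictness and on $\tau>0$ are sound and consistent with how the paper uses the condition (e.g., it later assumes $b>\mu$ strictly, writing $b=k\mu$ with $k>1$).
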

We take the example of Ethereum to illustrate that $b_0$ can indeed be set sufficiently high in practice. 
Prior empirical studies on Ethereum have looked at the marginal value of time in the block building auction\footnote{On 
Ethereum, most validators auction off the right to build their block to sophisticated block builders that are able to 
extract more MEV. Assuming that the auction is competitive, this provides a close lower bound for estimating MEV.}.
The increase in MEV is estimated as $\mu = 6.5 \times 10^-6~\mathtt{ETH}/\mathtt{ms}$ in \cite{SSTPSM23} 
and $\mu = 5.71 \times 10^-6~\mathtt{ETH}/\mathtt{ms}$ in \cite{OKVKMT23}.
For a block reward curve of the form $b_0 - \frac{b_0}{12~\mathtt{s}}x$, 
$b_0$ needs to be greater than $\mu \cdot 12000~\mathtt{ms}$ per Observation~\ref{lem:block-reward-needs-to-be-high}.
For $\mu = 6.5 \times 10^-6~\mathtt{ETH}/\mathtt{ms}$, this boils down to $b_0 > 0.078~\mathtt{ETH}$.
Fortunately, the total consensus reward per block in Ethereum is much greater than $0.078~\mathtt{ETH}$~\cite{E23,OKVKMT23}.

\subsubsection{Setting $m$.}
Our results only require that $m > n-c$. This is a natural condition, as consensus protocols assume that at most $n-c$ nodes can be faulty, implying that the timeliness vote should not come from a faulty node. Moreover, since $c>n/2$ for all reasonable BFT protocols, this implies that $m$ is not required to be higher than the quorum size $c$ of the BFT protocol. 
The coalition size condition is also quite natural.
In general, $m$ should be set sufficiently high for resistance to coalitions (including whales), but not so high that correct nodes in BFT protocols are insufficient to form the timeliness vote. If fewer than $m$ validators submit a timing vote, the block reward would be zero, erasing any incentive to propose early. 
One natural setting is $m = 2(n-c)$, which ensures that $k = n-c$, and so the small coalition size accounts for all byzantine faulty nodes in BFT consensus models.
Concretely, if $c = 2/3n$, then a natural setting is $m = 2/3n$, resulting in coalitions of all $n/3$ byzantine nodes to be characterized as small and not affecting the protocol's responsiveness (Theorem~\ref{thm:small-informal}). 

\subsubsection{Target duration.}
While responsiveness is desirable, in practice, decentralized protocols require that the round times are not too fast and slow nodes are able to keep up.
Observe that the fastest $c$ nodes can drive the protocol, leaving the rest behind.
In practice, blockchain protocols attempt to have a \textit{target duration} (minimum round time), e.g., 0.3 seconds in Solana~\cite{solana-chorusone-timing}.
Our block reward function can be simply modified as follows to ensure that proposing before the target duration is never beneficial: the block reward starts at $0$ for time 0, and increases until the target duration, and then decreases from there. This ensures that leaders will not want to propose before the target duration, but also not want to delay their proposal beyond the target duration.

\section{Conclusion} \label{sec:discussion}
We modeled the timing game in responsive BFT protocols and argued for 
measuring proposal timeliness via a vote. Our results indicate that 
with a carefully chosen block reward curve and timeout, responsiveness 
itself can be used to align timing incentives.   

For future work, there are several avenues to explore. 
The timeliness measurement requires a voting protocol that 
needs to be designed and implemented. 
The design space for this protocol is rich. One option is to 
make the next leader aggregate timeliness votes and post
(potentially with a delay). While this would be easy to implement, 
it increases the scope of strategic manipulations by allowing the aggregator 
to filter. Another promising approach is to aggregate and post timeliness votes on-chain.

Alternative incentive knobs to make early-proposing desirable should also be 
explored. Our discussion in \cref{apx:leader} could serve as a first step in 
considering more advanced leader-election schemes that punish late-proposing.
Moreover, as these approaches are not necessarily contradictory, further analysis 
of hybrid approaches could be explored.

Our coalition resistance results could also be extended and formalized further. For example, one could define a meta-game, where validators can join or leave coalitions over time. This would allow for formalizing our argument against small and medium rational coalitions and investigating further coalition dynamics.

\section*{Acknowledgments}
We would like to thank Michael Setrin and Matthew Weinberg for useful comments on earlier drafts.

\begin{toappendix}
    \section{Fairness Results (\cref{sec:fairness})} \label{apx:fairness}
In this section, we analyze fairness for two concrete latency models under dynamic or static block rewards. We refer to the former as the ``early-proposing equilibrium'' and the latter as the ``late-proposing equilibrium''.

We analyze two latency models inspired by concerns of geographical collocation. The first model is one where nodes are relatively evenly dispersed, and the second is one with two clusters of nodes, with $2/3$ and $1/3$ of the nodes in the two clusters.
Our results indicate that fairness slightly degrades when moving to early-proposing equilibrium.
At the end of the section, we provide an empirical analysis of latency and fairness experienced 
in a real-life distributed system and explore broadcast protocols as a promising avenue for mitigating unfairness.
Our theoretical results are derived via deterministic latencies and consider the following two models:
\begin{itemize}
    \item \emph{Line model:} Nodes are evenly dispersed on a number line. 
    Latency is directly proportional to distance. 
    Intuitively, we want to measure the difference in utility between the center node and the endpoint node.
    \item \emph{Cluster model:} Nodes are clustered into two clusters. 
    Within a cluster, latency is very low and across the clusters latency is very high. 
    Intuitively, we will compare the gap between a node in one cluster and a node in the other cluster.
\end{itemize}
In \cref{appendix:sim} we also simulate these models with lognormal latency 
distributions and get similar results to the theoretical ones.

 \subsubsection{Reward slope and timeout slack.}
Since early-equilibria requires the slope of block reward to 
be greater than the MEV slope (Observation~\ref{lem:block-reward-needs-to-be-high}), 
we will assume that $b > \mu$ throughout this section. Therefore, we will phrase our results in 
terms of $k$ such that $b = k\mu$ for $k > 1$. 
Another intuitive parameter we use in our results is the timeout slack, 
how many times larger the timeout is as compared to the slowest node's start time in order to allow for everyone to participate.
Intuitively, when consensus parameters are set, the unhappy path timeout gets set based on how early 
the slowest node can propose their blocks. 
Let $d$ represent this slack in the timeout. 

\subsubsection{Late equilibria.}
Here, we assume a fixed block reward of $b_0$. We assume validators will as late as possible while ensuring that every block still makes it in.
Hence we propose at $\Delta^* = \tau - S^{j,i} - E^{i - j}_c$ where $i$ is the current leader preceded by leader $j$.
Note that this is even better than proposing late in real life since there is no risk of missing blocks. 
We call this the late proposing strategy profile $\vec{\sigma_\text{late}}$.

Throughout the section, we theoretically analyze deterministic latency models, where links have fixed latency both on the line and 
the cluster model. Later on, we simulate both models with log-normal latency distributions.\footnote{For all our results, we assume that $n$ is divisible by 2 and 3 to simplify the expressions.}

We observe that there is a cost to early equilibria in terms of fairness.
For parameters\footnote{For example, for high performance chains, 
500~\texttt{ms} block times with a 2500~\texttt{ms} timeout and a quorum threshold at $2/3$ is very reasonable. 
The parameter $k$ depends on calibration of rewards. 
For a discussion of reward parameters see \cref{sec:reward}.} 
such that $k = 1.5$, and $m = c = 2/3$, we see that 
the fairness to utilizing early-equilibria compared to late (see \cref{fig:fairness-plots}) is up to $8\%$ worse in terms of advantage.

\subsection{Nodes on a line} \label{appendix:fairness1} 
We begin by considering deterministic latency. 
Validators are evenly spaced along a number line, with validator $i$ at number $i$. 
The latency between $i$ and $j$ is just $|j-i|$. We aim to measure the fairness index,
which occurs between validator $n/2$ (the most centrally located) and validator $0$. 
To do so, we assume that block reward and MEV are linear: 
$B(x) =b_0-bx$ and $M(x) = \mu x + \mu_0$. 
We also assume that $m = c = 2n/3$. 
We call this the \emph{deterministic line} model.

We now compute utility ratios to measure fairness in the deterministic line model.
First observe that the ratio of their utilities in the honest equilibrium 
depends on $\overline{S}^i$ and $\overline{E}^{i}_m$ for $i \in \{0,n/2\}$.
\begin{lemma} For the early-proposing strategy profile $\sigma^*$, if 
$b' > m$, then the fairness index in the deterministic 
line model is given by the following:
\[
\textsc{Advantage}(\vec{\sigma^*})
= \frac{(b_0 + \mu_0)-(b-\mu)\E[\overline{S}^{n/2}] -b\E[\overline{E}_m^{n/2}]}{(b_0 + \mu_0)-(b-\mu)\E[\overline{S}^{0}] - b\E[\overline{E}_m^{0}]} - 1
\]
\end{lemma}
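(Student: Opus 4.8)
The plan is to compute the time-averaged utility $U_i(\vec{\sigma^*})$ for the two extreme validators $i \in \{0, n/2\}$ directly from the closed-form expression established in \cref{thm:latency-NE}, and then take the ratio. Recall from that theorem that
\[
U_i(\sigma^*) =
\frac{\sum_j \left( \E\left[M(S^{j, i})\right] + \E\left[ B(S^{j, i} + E^{i-j}_{m}) \right] \right)}{\sum_{j, k} \E[S^{j, k}]}.
\]
The first step is to observe that the denominator $\sum_{j,k} \E[S^{j,k}]$ is a global quantity that does not depend on $i$, so it cancels entirely when forming the ratio $U_{n/2}(\sigma^*)/U_0(\sigma^*)$. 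This is the key simplification: fairness in the early-proposing equilibrium is governed purely by the numerators.

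Next I would expand the numerator using the assumed linear forms $M(x) = \mu x + \mu_0$ and $B(x) = b_0 - bx$. By linearity of expectation, $\sum_j \E[M(S^{j,i})] = n\mu \E[\overline{S}^i] + n\mu_0$ (using the definition $\E[\overline{S}^i] = \frac{1}{n}\sum_j \E[S^{j,i}]$), and similarly $\sum_j \E[B(S^{j,i} + E^{i-j}_m)] = n b_0 - nb\,\E[\overline{S}^i] - nb\,\E[\overline{E}^i_m]$, where I use $\E[\overline{E}^i_m] = \frac{1}{n}\sum_j \E[E^{i-j}_m]$. Summing and dividing the whole numerator by the common factor $n$, each validator's (scaled) numerator becomes
\[
(b_0 + \mu_0) - (b - \mu)\E[\overline{S}^i] - b\,\E[\overline{E}^i_m].
\]
Forming the ratio of this expression at $i = n/2$ to the one at $i = 0$ and subtracting $1$ yields exactly the claimed formula for $\textsc{Advantage}(\vec{\sigma^*})$, provided the ordering of utilities is correct — that is, provided validator $n/2$ (the center) has the higher utility and validator $0$ (the endpoint) the lower. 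This is where the hypothesis $b' > m$ must enter: the center node has lower average start time $\E[\overline{S}^{n/2}]$ and lower average measurement error $\E[\overline{E}^{n/2}_m]$, so the sign of the coefficient $(b-\mu) > 0$ (guaranteed by the early-equilibrium condition $b > \mu$) and the sign of $-b$ together ensure the center's numerator exceeds the endpoint's, putting $n/2$ in the $\max$ and $0$ in the $\min$.

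The main obstacle I anticipate is not the algebra — which is routine — but pinning down exactly what the hypothesis $b' > m$ means and verifying it is the right condition to guarantee that the maximum utility is attained at the center and the minimum at the endpoint. The notation $b'$ and its relation to $b = k\mu$ and the slack $d$ needs to be unwound from the surrounding setup; presumably $b'$ is a normalized slope and the inequality ensures that the block-reward penalty for latency dominates the MEV gain, so that better-connected (central) nodes are genuinely advantaged. I would therefore spend the bulk of the argument establishing the monotonicity claim: that $\E[\overline{S}^i]$ and $\E[\overline{E}^i_m]$ are both minimized at $i = n/2$ and maximized at the endpoints, so that the map $i \mapsto (b_0+\mu_0) - (b-\mu)\E[\overline{S}^i] - b\,\E[\overline{E}^i_m]$ attains its extremes precisely at $n/2$ and $0$. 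This reduces to a symmetry-and-convexity argument on the evenly-spaced line, which I would defer to the explicit computation of the order statistics of $\{|k-i|\}_k$ and $\{|i-k| - |j-k|\}_k$ in the subsequent lemmas.
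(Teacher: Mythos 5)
Your proposal is correct and follows essentially the same route as the paper's proof: cancel the common round-duration denominator in the utility ratio, expand the per-round reward by linearity of $M$ and $B$ to get $(b_0+\mu_0)-(b-\mu)\E[\overline{S}^i]-b\,\E[\overline{E}^i_m]$, and invoke the extremality of the center and endpoint nodes to place $n/2$ in the maximum and $0$ in the minimum. Your reading of the (typo-laden) hypothesis $b' > m$ as the slope condition $b > \mu$ is exactly the paper's standing assumption in the fairness section, and your deferral of the monotonicity of $\E[\overline{S}^i]$ and $\E[\overline{E}^i_m]$ to the explicit order-statistic computations matches how the paper handles it.
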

\begin{proof}
    Observe that the time contribution of average round duration cancels out in the ratio of two utilities. 
    Hence, consider the reward earned in expectation by each player. 
    Observe that player $i$ becomes leader right after all players $j$ with probability $1/n^2$, 
    in which case it proposes at $S^{j,i}$. Hence, using $\overline{S}^i = \frac{1}{n}\sum_{j \in [n]} S^{j,i}$ and 
    $\overline{E}^{i} = \frac{1}{n} \sum_{j \in [n]} E^{i-j}_m$, we have the following:
    \begin{align*}
        \frac{U_{i}(\sigma^*)}{U_j(\sigma^*)} 
        &= \frac{b_0 - b(\E[\overline{S}^i + \overline{E}^{i}_m]) + \mu(\E[\overline{S}^i]) + \mu_0}{b_0 - b(\E[\overline{S}^j + \overline{E}^{j}_m]) + \mu(\E[\overline{S}^j]) + \mu_0}
    \end{align*}
    Since the player on the edge of the line and the middle of the line have the two extreme latency distributions 
    and $b' > m$, player $n/2$ earns the most and player $0$ earns the least. 
    We get the desired conclusion after applying linearity of expectation.
\end{proof}
Now, the expectations range only over the leader election randomness, as latencies are otherwise deterministic. 
\begin{lemma} \label{lem:det-quorum-line}
In the deterministic line model,
assume without loss of generality that $i \leq j$ for $i,j \in [n]$.
Then, 
\[
    \E[Q_c^{i \to j}] =
    \E[Q_c^{j \to i}] = \begin{cases}
        j-i & j-i+1\ge 2n/3 \\
        2n/3 - 1 + \kappa_{ij} 
        & j-i+1 < 2n/3 \land j-i+1+2w_{ij} \ge 2n/3 \\
        4n/3 -2w_{ij}- j + i - 2 & \text{otherwise} \\
    \end{cases} 
\]
where $w_{ij} = \min(n-j-1, i)$ 
and  $\kappa_{ij} = \mathbb{I}[2n/3 - (j-i+1) \mod 2 = 1]$.
\end{lemma}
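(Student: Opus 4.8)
The plan is to treat $\E[Q_c^{i\to j}]$ as the $c\ts{th}$ order statistic of the multiset $\{f(k)\}_{k\in[n]}$ where $f(k) = L_{i\to k}+L_{k\to j} = |k-i|+|k-j|$, and to compute it by grouping the values into discrete ``levels'' and counting cumulatively. The very first step is the symmetry claim: since $|j-k|+|k-i| = |i-k|+|k-j|$ for every $k$, the two defining multisets of $Q^{j\to i}_c$ and $Q^{i\to j}_c$ coincide, so the two order statistics (hence expectations) are equal. Because the latencies are deterministic here, the ``expectation'' is over nothing once the pair $(i,j)$ is fixed, so it suffices to evaluate a single order statistic; I assume $i \le j$ without loss of generality.

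Next I would describe the level structure of $f$. On the interval $\{i,i+1,\dots,j\}$ the function is constant equal to its minimum $j-i$, contributing $j-i+1$ nodes at value $j-i$ (``level $0$''). Stepping $s\ge 1$ positions left of $i$ (to position $i-s$) or right of $j$ (to position $j+s$) yields $f = (j-i)+2s$ in both directions, so ``level $s$'' has value $(j-i)+2s$ with multiplicity equal to the number of sides that still possess a node at distance $s$. Since there are exactly $i$ nodes to the left of $i$ and $n-1-j$ nodes to the right of $j$, level $s$ has multiplicity $2$ for $1 \le s \le w_{ij}=\min(i,\,n-1-j)$ and multiplicity $1$ for $w_{ij} < s \le \max(i,\,n-1-j)$. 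This is the key combinatorial picture, and I would verify the accounting $(j-i+1)+i+(n-1-j)=n$ to confirm every node is placed.

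With this in hand I would locate where the running count first reaches $c=2n/3$, splitting into the three cases of the statement. If $j-i+1 \ge 2n/3$, level $0$ alone already contains the $c\ts{th}$ smallest value, giving $Q_c = j-i$. Otherwise let $r = 2n/3-(j-i+1)>0$ be the number of additional nodes needed. As long as we stay in the paired regime, these come two per level, so we enter level $t=\lceil r/2\rceil$; using $2\lceil r/2\rceil = r + \mathbb{I}[r \text{ odd}]$ the value is $(j-i)+2\lceil r/2\rceil = 2n/3 - 1 + \kappa_{ij}$, valid exactly when $w_{ij}\ge t$, i.e. $j-i+1+2w_{ij}\ge 2n/3$ (the second case). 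If instead $j-i+1+2w_{ij} < 2n/3$, the paired levels are exhausted and the remaining nodes arrive one per level from the longer side; the cumulative count at single‑regime level $s>w_{ij}$ is $(j-i+1)+w_{ij}+s$, which first meets $2n/3$ at $s^\ast = 2n/3-(j-i)-1-w_{ij}$, yielding value $(j-i)+2s^\ast = 4n/3 - 2w_{ij} - j + i - 2$ (the third case).

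The main obstacle I anticipate is the parity bookkeeping that produces $\kappa_{ij}$ in the second case: when $r$ is odd, the final node needed lies in a paired level but only one of its two slots is required, so the order statistic still sits at that level's value and one must argue the cumulative count crosses $2n/3$ precisely there. Relatedly, I would carefully pin down the regime boundary at $s=w_{ij}$ so that the transition from multiplicity $2$ to multiplicity $1$ is applied to the correct case. Integrality is what makes these crossings land exactly on $c$ rather than straddling a level: the standing assumption that $3 \mid n$ makes $c=2n/3$ and $s^\ast$ integers, which I would flag as the reason the cumulative counts hit $2n/3$ cleanly.
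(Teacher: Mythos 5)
Your proposal is correct and follows essentially the same route as the paper's proof: the symmetry of $|j-k|+|k-i|$, the three regimes for $k$ (inside $[i,j]$, left of $i$, right of $j$), and the pair-then-single accumulation governed by $w_{ij}=\min(i,\,n-1-j)$, with the ceiling/parity bookkeeping yielding $\kappa_{ij}$ in the middle case. Your ``level with multiplicity'' formulation and explicit cumulative counts are just a slightly more systematic phrasing of the paper's pairing argument, and all three case expressions match.
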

\begin{proof}
We need to compute $Q_c^{j\to i}$ for all $j$ and for $i = 1, n/2$. 
So, we need the $(2/3)\ts{th}$ order statistic of $\{L_{j \to k} + L_{k \to i}\}_k=\{|j-k|+|k-i|\}_k$. 
Since each path is symmetric, $\E[Q_c^{j \to i}] = \E[Q_c^{i \to j}]$ 
so without loss of generality assume that $j \ge i$. Then, there are three cases to consider for $k$:
\begin{enumerate}
	\item For all $k \in [i, j]$, $L_{j \to k} + L_{k \to i} = j-i$.
	\item For all $k < i$: $L_{j \to k} + L_{k \to i} = j+i - 2k >j-i$.
	\item For all $k > j$, $L_{j \to k} + L_{k \to i} = 2k - (j+i) > j-i$.
\end{enumerate}
For the $(2/3)\ts{rd}$ fastest node $k$, 
note first that the nodes in $[i, j]$ are fastest. 
After those nodes are exhausted, the nodes $i-1$ and $j+1$ have the same latency: $j-i+2$. 
So, then we pair up these nodes while they exist. 
After that point (i.e., we hit $k = 1$ or $n$ on the sides), 
we select single nodes on the remaining side to get to the $2/3$ fastest node. 
Thus, $Q_c^{j \to i}$ can be broken into three cases:
	\begin{enumerate}[(a)]
		\item If $j-i+1\ge 2n/3$, then all $2n/3$ nodes are contained in $[i,j]$ so $Q_c^{j \to i} = j-i$.
		\item If $j-i+1 < 2n/3$ but $j-i+1+2\min(n-j-1, i) \ge 2n/3$, 
        then we can pair nodes from both sides of the line to form $2n/3$ votes so 
        $Q_c^{j \to i} = j-i+ 2 \left\lceil \left( \frac{2n}{3} - (j-i+1) \right) / 2 \right\rceil$. 
        The second term is the offset from the paired nodes before $i$ and after $j$. 
        This simplifies to $2n/3 - 1 + \kappa_{ij}$.
		\item Otherwise, $Q_c^{j \to i} = (j-i)+ 2\min(n-j-1,i)
        + 2\left( \frac{2n}{3}-(2\min(n-j-1, i) + j-i+1) \right)$. 
        Here, the $\min(n-j-1, i)$ term comes from pairing nodes outside $[i, j]$, 
        and the last term comes from the remaining nodes needed on one side. 
        These terms are multiplied by two, 
        since the offset increases by 2 each time. 
        This simplifies to 
        $Q_c^{j \to i} = 4n/3 -2\min(n-j-1, i) - j + i -2$.
	\end{enumerate}
\end{proof}

\begin{lemma} \label{lemma:s}
In the deterministic line model,
$\E[\overline{S}^{n/2}] = \frac{25n}{36} - \frac{2}{3}$%
and $\E[\overline{S}^{0}] = \frac{17n}{18}-\frac{7}{6}$.
\end{lemma}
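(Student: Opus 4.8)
The plan is to reduce both averages to sums of the quorum-time closed form in \cref{lem:det-quorum-line}. First I would note that in each of the three cases of that lemma one has $Q_c^{j \to i} \ge |j-i| = L_{j\to i}$: in case (a) it is an equality, and in cases (b), (c) it follows directly from the case hypotheses (e.g.\ in (b), $j-i < 2n/3-1 \le 2n/3 - 1 + \kappa_{ij}$). Hence $S^{j,i} = \max(Q_c^{j\to i}, L_{j\to i}) = Q_c^{j\to i}$, so $\E[\overline{S}^i] = \frac{1}{n}\sum_{j}\E[Q_c^{j\to i}]$, and the whole lemma becomes the problem of summing the piecewise formula over $j \in \{0,\dots,n-1\}$ for the two fixed leaders $i=0$ and $i=n/2$, where the only remaining expectation is over the (already-collapsed) leader election.

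For $i = 0$ the endpoint has an empty left wing, so $w_{0j} = \min(n-1-j,\,0) = 0$ for every $j$; this eliminates case (b) and collapses case (c) to $Q_c^{j\to 0} = 4n/3 - j - 2$. I would split the range at $j = 2n/3 - 1$: for $j \ge 2n/3 - 1$ we are in case (a) with $Q = j$, and for $j \le 2n/3 - 2$ in case (c). Summing the two elementary arithmetic series gives a total of $p(p-1) + \tfrac{n(n-1)}{2}$ with $p = 2n/3$, which after dividing by $n$ is exactly $\frac{17n}{18} - \frac{7}{6}$.

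For $i = n/2$ the work is heavier, and this is the main obstacle. Here every distance $d = |j - n/2|$ is at most $n/2 < 2n/3 - 1$, so case (a) does not arise and each $j$ lies in (b) or (c). I would treat the two sides of the center separately, because the wings have sizes $n/2$ and $n/2 - 1$ and hence the (b)/(c) boundary is asymmetric: on the right $d + 1 + 2w = n - 1 - d$ (boundary at $d = n/3 - 1$), on the left $d + 1 + 2w = n + 1 - d$ (boundary at $d = n/3 + 1$). On the case-(c) portions the formula simplifies to $n/3 + d$ on the right and $n/3 + d - 2$ on the left. Using the divisibility assumption (take $n = 6m$) makes $2n/3$ even, so $\kappa_{ij} = \mathbb{I}[d \text{ even}]$ and each case-(b) term equals $2n/3 - 1 + \mathbb{I}[d\text{ even}]$; the parity indicators then sum to a simple count of even distances in each range.

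Collecting the center term, the two case-(b) ranges, and the two case-(c) ranges, summing the resulting arithmetic series, and counting even indices for the $\kappa$ contributions yields a total of $25m^2 - 4m$, so $\E[\overline{S}^{n/2}] = \frac{25m^2 - 4m}{6m} = \frac{25n}{36} - \frac{2}{3}$. The delicate points, where I would be most careful, are getting the left/right case boundaries exactly right despite the asymmetric wing sizes, and correctly evaluating the $\kappa$ (even-distance) counts; substituting $n = 6m$ at the outset is the cleanest way to keep these boundary indices integral and to make the otherwise-awkward parity terms collapse, and the parity hypothesis on $n$ is precisely what forces that collapse.
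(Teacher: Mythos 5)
Your proposal is correct and follows essentially the same route as the paper's proof: reduce $S^{j,i}$ to $Q_c^{j\to i}$ (the paper invokes the triangle inequality where you verify domination case-by-case), then sum the piecewise formula of \cref{lem:det-quorum-line} over the appropriate ranges with parity counts for $\kappa$; your distance parametrization $d=|j-n/2|$, separate center term, and substitution $n=6m$ are a cosmetic regrouping of the paper's four index segments $[0,n/6-2]$, $[n/6-1,n/2-1]$, $[n/2,5n/6-1]$, $[5n/6,n-1]$. I verified your case boundaries, simplified case-(c) expressions ($n/3+d$ right, $n/3+d-2$ left), and totals ($25m^2-4m$ for the center node, $p(p-1)+\tfrac{n(n-1)}{2}$ with $p=2n/3$ for the endpoint) all check out exactly, matching the lemma's stated values.
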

\begin{proof}
Recall that $S^{j, i} = \max(Q_c^{j \to i}, L_{j\to i})$. In this model, the quorum time always takes longer than the second term (due to these deterministic latencies obeying the triangle inequality). 
By \cref{lem:det-quorum-line}, we consider three cases for both $\overline{S}^0$ and $\overline{S}^{n/2}$.

For $\overline{S}^0$ observe that the middle case vanishes, as there are no nodes before the first. 
Further, $\min(n-j-1, i)=0$, as $i=0$. Putting this all together:
\begin{align*}
\overline{S}^0
&= \frac{1}{n}\sum_{j \in [n]} S^{j, 0} =  \frac{1}{n}\sum_{j \in [n]} Q^{j \to 0}_{2/3} \\
&= \frac{1}{n} \left(\sum_{j=0}^{2n/3-2} \left( \frac{4n}{3} -j - 2 \right) + \sum_{j = 2n/3-1}^{n-1} j \right)
& \text{\cref{lem:det-quorum-line}} \\
&= \frac{1}{n} \left( \frac{(n-1)(2n-3)}{18} + \frac{(n+3)(5n-6)}{18} \right) \\
&= \frac{17n}{18} - \frac{7}{6}
\end{align*}

Now, we consider $\overline{S}^{n/2}$. Note that our earlier derivations assumed that $j \ge i$. However, $S^{j, i} = S^{i, j}$ since $L_{j \to i} = L_{i \to j}$. So, we simply compute $S^{n/2, j}$ for $j < n/2$. Note that in this case, $j - n/2 + 1$ is always less than $2n/3$, so the first case for $Q_c^{j \to n/2}$ vanishes. Further, whether there's enough nodes for the ``pairing'' strategy to work simply depends on if $j \le n/6$ or $j > 5n/6$; otherwise, we use the second case. Putting it all together:
\begin{align*}
n\overline{S}^{n/2} 
&= \sum_{j \in [n]} S^{j, n/2} \\
&= \sum_{j=0}^{n/2-1} S^{n/2, j} + \sum_{j = n/2}^n S^{j, n/2}\\
&= \sum_{j=0}^{n/2-1} Q^{n/2 \to j}_{2/3} + \sum_{j = n/2}^{n-1} Q^{j \to n/2}_{2/3}
\end{align*}
Now, we compute these terms using \cref{lem:det-quorum-line}.
The four segments we consider are $[0, n/6 - 2], [n/6 - 1, n/2 - 1], [n/2, 5n/6-1], [5n/6,n-1]$.

First consider $[0, n/6 - 2], [n/6 - 1, n/2 - 1]$.
Observe that for $Q^{n/2 \to j}_{2/3}$, $\min(n/2 - 1, j) = j$. 
Moreover, for $j \in [0, n/6 - 2]$ case 3 applies and for $j \in [n/6 - 1, n/2 - 1]$ case 2 applies 
of \cref{lem:det-quorum-line} always applies.
Therefore,
\begin{align*}
    \sum_{j=0}^{n/2-1} Q^{n/2 \to j}_{2/3}
    &= \sum_{j=0}^{n/6-2} \left( \frac{4n}{3} -j + \frac{n}{2} - 2 \right) 
    + \sum_{j=n/6-1}^{n/2-1} \left( \frac{2n}{3} - 1 + \kappa_{ij} \right) \\
    &= \left( \sum_{j=0}^{n/6-2} \left( \frac{5n}{6} -j + \frac{n}{2} - 2 \right) \right)
    + \left( \frac{n}{3} + 1 \right) \left( \frac{2n}{3} - 1\right) + \frac{n}{6} \\
    &= \left(\frac{n^2}{8} - \frac{11n}{18} + 1 \right) + \left( \frac{2n^2}{9} + \frac{n}{2} - 1\right)
\end{align*}
where we use the fact that there are $n/6$ integers that make $\kappa_{(n/2)j} = 1$ 
in the second segment for $n$ divisible by 6.

Similarly, for $[n/2, 5n/6-1]$ case 2 applies and for $[5n/6,n-1]$ case 3.
We use the fact that in $[5n/6,n-1]$, $\min(n - 1 - j, n/2) = n-1-j$ and in $[n/2, 5n/6-1]$
there are $n/6$ that make $\kappa_{(n/2)j} = 1$.

\begin{align*}
    \sum_{j = n/2}^{n-1} Q^{j \to n/2}_{2/3}
    &= \sum_{j=n/2}^{5n/6-1} \left( \frac{2n}{3} - 1 + \kappa_{ij} \right)
       + \sum_{j=5n/6}^{n-1} \left( j - \frac{n}{6} \right) \\[6pt]
    &= \left( \frac{n}{3} \right) \left( \frac{2n}{3} - 1 \right) + \frac{n}{6}
       + \left( \sum_{j=5n/6}^{n-1} \left( j - \frac{n}{6} \right) \right) \\
    &= \left( \frac{2n^2}{9} - \frac{n}{6} \right) + \left( \frac{n^2}{8} - \frac{n}{12} \right)
\end{align*}
Lastly, adding all and dividing by $n$, we get the following:
\[
    \overline{S}^{n/2}
    = \frac{1}{n} \left( \sum_{j=0}^{n/2-1} Q^{n/2 \to j}_{2/3} + \sum_{j = n/2}^{n-1} Q^{j \to n/2}_{2/3} \right) 
    = \frac{25n}{36} - \frac{2}{3}
\]

\end{proof}

\begin{lemma} \label{lemma:q}
$\E[\overline{E}_m^{0}] = \frac{7n-15}{18}$ and $\E[\overline{E}_m^{n/2}] = \frac{n}{36} - \frac{2}{3}$.
\end{lemma}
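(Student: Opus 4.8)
The plan is to evaluate the $m$th order statistic $E^{i-j}_m$ of the multiset $\{L_{i \to k} - L_{j \to k}\}_k = \{|i-k| - |j-k|\}_k$ for each possible previous leader $j$, then average over $j \in [n]$, specializing to $i \in \{0, n/2\}$ and $m = c = 2n/3$. The structural fact I would exploit is that, viewed as a function of $k$, the difference $|i-k|-|j-k|$ is constant-then-linear-then-constant: it equals $+|i-j|$ on one outer region, $-|i-j|$ on the other, and changes with slope $\pm 2$ across the interval between $i$ and $j$. Hence the sorted values form a block of copies of $-|i-j|$, a run of single intermediate values spaced by $2$, and a block of copies of $+|i-j|$, so locating the $m$th smallest reduces to counting the bottom plateau and then stepping into the middle run. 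Throughout I would use that $n$ is divisible by $6$ to keep the index boundaries integral.

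First I would handle $i = 0$, the easier monotone case. Here $|{-k}|-|j-k|$ increases from $-j$ to $j$ as $k$ runs from $0$ to $j$ and is then flat at $j$, so exactly $j$ entries lie strictly below $j$ and $n-j$ entries equal $j$. Comparing $m = 2n/3$ to $j$ gives $E^{0-j}_m = 4n/3 - 2 - j$ when $j \ge 2n/3$ and $E^{0-j}_m = j$ when $j < 2n/3$. Averaging these two arithmetic progressions, splitting the sum at $j = 2n/3$, and simplifying yields $\E[\overline{E}^0_m] = \frac{7n-15}{18}$.

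The case $i = n/2$ is the main obstacle, precisely because $n/2$ is not the exact midpoint of $\{0,\dots,n-1\}$, so the bottom plateau has a slightly different size depending on whether $j < n/2$ or $j > n/2$. I would split on the side of $j$ (with $j = n/2$ contributing $0$), set $d = |n/2 - j|$, and compare the plateau size to $m$ on each side. For $j < n/2$ the $-d$ plateau has $n/2$ copies, giving $E = j - n/6$ when $d \ge n/6+1$ and $E = n/2 - j$ when $d \le n/6$; for $j > n/2$ the plateau has $n/2+1$ copies, giving $E = \tfrac{5n}{6} - j - 2$ when $d \ge n/6$ and $E = j - n/2$ when $d \le n/6 - 1$. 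The delicate point is exactly this off-by-one in the plateau count, which shifts the regime threshold by one on the right-hand side and produces the extra constant $-2$; checking a couple of boundary values of $j$ (e.g.\ $j = 2n/3-1$ versus $j = 2n/3$) would pin down the parity bookkeeping.

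Finally I would sum the four resulting arithmetic progressions over the ranges $j \in [0, n/3)$, $[n/3, n/2]$, $(n/2, 2n/3)$, and $[2n/3, n)$. Each is a triangular sum; upon collecting terms, the linear-in-$n$ contributions from the two middle ranges cancel while the $n^2$ terms combine to $\frac{n^2}{36}$, leaving total $\frac{n^2}{36} - \frac{2n}{3}$ and hence $\E[\overline{E}^{n/2}_m] = \frac{n}{36} - \frac{2}{3}$ after dividing by $n$. I expect the only genuine risk to lie in the boundary and parity edge cases of the $i = n/2$ computation; the $i=0$ case and all the summations are routine.
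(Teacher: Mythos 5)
Your proposal is correct and takes essentially the same route as the paper: both exploit the fact that $|i-k|-|j-k|$ is monotone in $k$ to pin the $m$\textsuperscript{th} order statistic at an explicit position, and your per-$j$ case formulas (including the $-2$ constant and the $n/6$ vs.\ $n/6-1$ thresholds for $i=n/2$), range splits, and resulting arithmetic-progression sums all check out against the paper's values $\frac{7n-15}{18}$ and $\frac{n^2}{36}-\frac{2n}{3}$ (before dividing by $n$). The only cosmetic difference is that the paper reads the order statistic directly as the value at the fixed index $k = 2n/3-1$ (when $i<j$) or $k = n/3$ (when $i>j$), which absorbs your $n/2$ vs.\ $n/2+1$ plateau bookkeeping automatically.
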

\begin{proof}
Recall that $E^{i-j}_m = f(m, \{L_{i \to k} - L_{j \to k}\}_k) = f(m, \{|i-k|-|j-k|\})$, i.e., the $m$th smallest number in the set. Now, suppose that $i < j$. Then, there are three cases:
\begin{enumerate}
	\item If $k < i$, then $i-k-(k-j) = i-j$.
	\item If $k \in [i, j]$, then $k-i -j +k = 2k-(i+j)$.
	\item If $k > j$, then $k-i-k+j = j-i$.
\end{enumerate}
Note that the values goes up monotonically with $k$. So, $E^{i-j}_m = L_{i\to m}-L_{j \to m}$, for $i < j$. 
A similar argument shows that when $j > i$, the values go down monotonically with $k$, 
and thus $E^{i-j}_m = L_{i \to (n-m)} - L_{j \to (n-m)}$. 
Note that $E_m^{i-i}=0$ and the $m\ts{th}$ index when indexing from 0 is $2n/3-1$.

We first compute $\overline{E}^0_m$:
\begin{align*}
n\overline{E}^0_m &= \sum_{j = 0}^{n-1} E_m^{0-j} \\ 
&= \sum_{j=1}^{2n/3-1} \left( \left(\frac{2n}{3} - 1\right) - 
\left(\frac{2n}{3} - 1 -j\right)\right) 
+ \sum_{j=2n/3}^{n-1} \left(\left( \frac{2n}{3} - 1 \right) - \left(j - \frac{2n}{3} + 1\right) \right) \\
&= \sum_{j=1}^{2n/3-1} j + \sum_{j=2n/3}^{n-1} \left(\frac{4n}{3}-j-2\right) \\
&= \frac{2n^2 - 3n}{9} + \frac{n^2-3n}{6} \\
&= \frac{n(7n-15)}{18}
\end{align*}
Thus, $\overline{E}^1_{m} = \frac{7n-15}{18}$, as desired.

Now, we compute $\overline{E}^{n/2}_m$:
\begin{align*}
n\overline{E}_m^{n/2}&=\sum_{j=1}^{n/2-1} E_m^{n/2-j} + \sum_{j = n/2+1}^n E_m^{n/2-j} \\
&= \sum_{j=1}^{n/2-1} \left( \left|\frac{n}{2} - \frac{n}{3} \right| - \left| j - \frac{n}{3} \right| \right) 
+ \sum_{j = n/2+1}^n \left(  \left|\frac{n}{2} - \left(\frac{2n}{3} - 1\right) \right| - \left| j - \left(\frac{2n}{3} - 1 \right) \right| \right)\\
&= \frac{n^2}{36} - \frac{2n}{3}
\end{align*}
So, $\overline{E}_m^{n/2} = \frac{n}{36} - \frac{2}{3}$, as desired.
\end{proof}

We now combine these results to prove the following theorem.
\begin{theorem}[line,\;$\sigma^*$] \label{thm:fairness-line} 
    In the line model with deterministic latencies, the advantage is:
    \begin{align*}
    \textsc{Advantage}(\sigma^*) = 
        \frac{
            d
            +\frac{\mu_0}{kn\mu}
            -\frac{k-1}{k} 
                \left(\frac{25}{36}- \frac{2}{3n} \right) 
            - \frac{1}{36} 
            - \frac{2}{3n}
            }
            {
            d
            + \frac{\mu_0}{kn\mu}
            -\frac{k-1}{k}
                \left(\frac{17}{18}-\frac{7}{6n} \right) 
            - \frac{7}{18}
            -\frac{5}{6n}
            }
    \end{align*}
\end{theorem}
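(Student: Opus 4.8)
The plan is to treat this as a substitution-and-normalization result, since all the genuinely probabilistic content has already been isolated in \cref{lemma:s,lemma:q}. I would start from the closed form proved in the advantage lemma stated just above, namely that under the standing assumption $b > \mu$ (equivalently the parameterization $b = k\mu$ with $k>1$) the best-connected node $n/2$ is the utility maximizer and the endpoint node $0$ is the utility minimizer, so that
$\textsc{Advantage}(\sigma^*) = \frac{(b_0+\mu_0)-(b-\mu)\E[\overline{S}^{n/2}]-b\E[\overline{E}_m^{n/2}]}{(b_0+\mu_0)-(b-\mu)\E[\overline{S}^{0}]-b\E[\overline{E}_m^{0}]} - 1$. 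The first step is simply to record that this identification of the extremal players is exactly what that lemma establishes (the reward is monotone in $\E[\overline{S}^i+\overline{E}^i_m]$, which is smallest for the central node and largest for the endpoint), so no fresh optimization argument is needed here.

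Next I would substitute the four means computed in \cref{lemma:s} ($\E[\overline{S}^{n/2}]=\frac{25n}{36}-\frac23$, $\E[\overline{S}^{0}]=\frac{17n}{18}-\frac76$) and \cref{lemma:q} ($\E[\overline{E}_m^{n/2}]=\frac{n}{36}-\frac23$, $\E[\overline{E}_m^{0}]=\frac{7n-15}{18}$). At this point the ratio is fully explicit in $b_0,b,\mu,\mu_0,n$, and the remaining task is to re-cast it in the normalized parameters $k$ and $d$ used in the statement.

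The third step is the normalization, which is where $k$ and $d$ enter. I would use $b=k\mu$ (forced by time-decreasingness via \cref{lem:block-reward-needs-to-be-high}) together with $b_0=b\tau$, and divide both numerator and denominator of the ratio by $bn = kn\mu$. This single operation does everything at once: the intercept becomes $\frac{b_0}{bn}=\frac{\tau}{n}$, which I would make explicit is precisely the timeout slack $d$ (the timeout measured against the $\Theta(n)$ diameter of the line); the MEV intercept becomes $\frac{\mu_0}{bn}=\frac{\mu_0}{kn\mu}$; each term $(b-\mu)\E[\overline{S}^i]/(bn)$ collapses to $\frac{k-1}{k}\cdot\frac1n\E[\overline{S}^i]$ using $\frac{b-\mu}{b}=\frac{k-1}{k}$; and each $b\E[\overline{E}_m^i]/(bn)$ becomes $\frac1n\E[\overline{E}_m^i]$. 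Carrying out the elementary divisions — e.g. $\frac1n\E[\overline{S}^{n/2}]=\frac{25}{36}-\frac{2}{3n}$ and $\frac1n\E[\overline{S}^{0}]=\frac{17}{18}-\frac{7}{6n}$ for the start-time terms — reproduces the two $\frac{k-1}{k}(\cdots)$ summands and the leading error-term constants $\frac1{36}$ and $\frac7{18}$ exactly.

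Since every ingredient is already in hand, there is no conceptual obstacle, and the work is purely bookkeeping; the one place where care is genuinely required is the $\Theta(1/n)$ lower-order corrections coming from the constant parts of the $\E[\overline{E}_m^i]$ means. The leading-in-$n$ terms fall out cleanly, but the signs of the residual $\frac{2}{3n}$ and $\frac{5}{6n}$ pieces must be tracked against \cref{lemma:q} rather than guessed, and I would verify them explicitly against the displayed numerator and denominator. I would also flag a small bookkeeping point for the reader: the statement displays the bare ratio, whereas the definition of \textsc{Advantage} subtracts $1$, so the trailing $-1$ should be understood to apply to the displayed fraction. Beyond these two cautions, the derivation introduces no new order statistics or expectations — it is linear algebra on the means already computed.
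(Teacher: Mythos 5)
Your proposal is correct and follows the paper's own proof essentially verbatim: start from the utility-ratio lemma for $\sigma^*$, substitute the means from \cref{lemma:s,lemma:q}, set $\tau = dn$ and $b = k\mu$ (so $b_0 = k\mu dn$ via $B(\tau)=0$), and divide numerator and denominator through by $kn\mu$. Your two cautions are also well placed: the paper's proof likewise computes the bare ratio (silently dropping the $-1$ from the definition of $\textsc{Advantage}$), and carrying out the expansion of $-b\,\E[\overline{E}_m^{n/2}]/(kn\mu)$ and $-b\,\E[\overline{E}_m^{0}]/(kn\mu)$ carefully gives $-\frac{1}{36}+\frac{2}{3n}$ and $-\frac{7}{18}+\frac{5}{6n}$, i.e.\ the opposite sign on the $\Theta(1/n)$ residuals from what the displayed statement shows, so the explicit sign check you propose against \cref{lemma:q} is exactly the right place to spend care.
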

\begin{proof}
    Recall that the ratio of utilities is:
    \begin{align*}
        \frac{U_{n/2}}{U_0} = \frac{(b_0+\mu_0)-(b-\mu)\E[\overline{S}^{n/2}] - b\E[\overline{E}_m^{n/2}]}{(b_0+\mu_0)-(b-\mu)\E[\overline{S}^{1}] - b\E[\overline{E}_m^{1}]}
    \end{align*}
    Plugging in the values from \cref{lemma:s,lemma:q}:
    \begin{align*}
        \frac{U_{n/2}}{U_0} 
        = \frac{(b_0 + \mu_0) - (b- \mu) \left(\frac{25n}{36} 
        - \frac{2}{3}\right) 
        - b\left( \frac{n}{36} - \frac{2}{3} \right)}
        {(b_0 + \mu_0)
        - (b-\mu)\left(\frac{17n}{18}-\frac{7}{6} \right) 
        - b\left( \frac{7n}{18} - \frac{5}{6} \right)}
    \end{align*}
    Now, note that the longest start time, $S^{0, n-1} = n-1 \approx n$. Suppose the timeout is a multiple of the longest start time, so that $\tau = dn$. 

    Now, since $B(\tau) = 0$, we know that $b_0 - b\tau = 0$. Since $b > \mu$, let $b = k\mu$ for $k > 1$. Then, we get that $b_0 - k\mu\tau = 0$, or $b_0 = k\mu\tau = k\mu dn$. 
    Now, the utility ratio simplifies to:
    \begin{align*}
        \frac{U_{n/2}}{U_0} 
        &= \frac{
            k{\mu}dn 
            + \mu_0
            - (k-1){\mu} \left(\frac{25n}{36}- \frac{2}{3}\right)
            - k{\mu}\left(\frac{n}{36}-\frac{2}{3}\right)
            }
            {
            k{\mu}dn
            + \mu_0
            -(k-1){\mu}\left(\frac{17n}{18}-\frac{7}{6} \right) 
            - k{\mu}\left(\frac{7n}{18} - \frac{5}{6}\right)
            } \\
        &= \frac{
            d
            +\frac{\mu_0}{kn\mu}
            -\frac{k-1}{k} 
                \left(\frac{25}{36}- \frac{2}{3n} \right) 
            - \frac{1}{36} 
            - \frac{2}{3n}
            }
            {
            d
            + \frac{\mu_0}{kn\mu}
            -\frac{k-1}{k}
                \left(\frac{17}{18}-\frac{7}{6n} \right) 
            - \frac{7}{18}
            -\frac{5}{6n}
            }
    \end{align*}
    As $n$ increases, this quickly converges to:
    \begin{align*}
        \frac{U_{n/2}}{U_0} &= \frac{d-\frac{25}{36} \cdot \frac{k-1}{k}  - \frac{1}{36}}{d-\frac{17}{18}\cdot\frac{k-1}{k} - \frac{7}{18}}
    \end{align*}
\end{proof}

\subsection{Nodes in two clusters} \label{appendix:fairness2} 
We now consider a different latency model. Suppose the validators are clustered into two clusters: $X$ and $Y$, where $|X| \ge |Y|$. Latencies within a cluster are small: $\eps$. Latencies across the cluster are larger ($l > \eps$). 
We call this the \emph{two cluster model}.
What is the ratio of utilities of a node in one cluster to the other?

As before, we assume that latencies are deterministic, and that block reward and MEV are linear
and $c = m = 2n/3$. 

We compute the utility ratios in the two cluster model.
For notational convenience, 
let $\bX$ denote the relative fraction $\bX = \frac{|X|}{n}$
and $\bY$ denote $\bY = \frac{|Y|}{n}$.

\begin{lemma} \label{lem:s-cluster}
    In the two cluster model, for $x \in X$ and $y \in Y$ the following holds:
    \[
    \E[\overline{S}^y] = \bX(\eps + l) + 2\bY l 
    \qquad \text{ and } \qquad 
    \E[\overline{S}^x] = \begin{cases}
       2\bX \eps + \bY (\eps+l) & |X| \ge c \\
       2\bX l + \bY (\eps+l) & \text{otherwise}
    \end{cases}
    \]
\end{lemma}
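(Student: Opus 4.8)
The plan is to reduce both quantities to a counting argument over the previous leader $j$, exploiting that latencies are deterministic so the only randomness in $\overline{S}^i = \frac1n\sum_j S^{j,i}$ is the uniform choice of $j \in [n]$. Since $S^{j,i} = \max(Q_c^{j\to i}, L_{j\to i})$ and $Q_c^{j\to i}$ is the $c\ts{th}$ order statistic of $\{L_{j\to k} + L_{k\to i}\}_k$, the core of the argument is to evaluate this order statistic for each of the four possible pairs of cluster memberships of $(j, i)$, and then observe that the quorum term always dominates the direct-latency term.

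First I would tabulate the summand $L_{j\to k} + L_{k\to i}$ as a function of $k$'s cluster, for each pairing of $j$ and $i$. When $j$ and $i$ lie in different clusters, every $k$ contributes exactly $\eps + l$ (one hop is intra-cluster and the other inter-cluster, regardless of where $k$ sits), so $Q_c^{j\to i} = \eps + l$ unconditionally. When $j, i$ are both in the same cluster $Z$, the summand is $2\eps$ for $k \in Z$ and $2l$ for $k \notin Z$; hence $Q_c$ equals $2\eps$ if that cluster already supplies a quorum (i.e.\ $|Z| \ge c$) and $2l$ otherwise. Applying this with $Z = Y$ and using $|Y| \le |X|$, hence $|Y| \le n/2 < c = 2n/3$, gives $Q_c^{j\to i} = 2l$ for $j, i \in Y$; applying it with $Z = X$ yields the case split on whether $|X| \ge c$ that appears in the statement.

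Next I would note that in every case $Q_c^{j\to i} \ge L_{j\to i}$ — intra-cluster pairs have $L = \eps \le 2\eps \le Q_c$ and inter-cluster pairs have $L = l < \eps + l = Q_c$ — so the triangle-inequality-like dominance collapses $S^{j,i} = \max(Q_c^{j\to i}, L_{j\to i})$ to simply $Q_c^{j\to i}$. It then remains to average: for $y \in Y$, the $|X|$ choices of $j \in X$ each give $\eps + l$ and the $|Y|$ choices $j \in Y$ each give $2l$, so $\E[\overline{S}^y] = \frac{1}{n}(|X|(\eps+l) + 2|Y|l) = \bX(\eps+l) + 2\bY l$. For $x \in X$, the $|Y|$ previous leaders in $Y$ each give $\eps + l$ while the $|X|$ leaders in $X$ each give $2\eps$ or $2l$ according to the $|X| \ge c$ test, yielding the two stated expressions after dividing by $n$ and substituting $\bX, \bY$.

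The main obstacle is getting the order-statistic evaluation exactly right, since that is where the conditional in $\E[\overline{S}^x]$ is born; the thresholds $|Y| < c$ and the comparison $|X| \ge c$ must be justified cleanly. One minor subtlety to check is the diagonal term $j = i$, where the self-pair contributes an extra value of $0$ to the multiset $\{L_{i\to k}+L_{k\to i}\}_k$; since $c \ge 2$ this single extra small value cannot shift the $c\ts{th}$ order statistic, so the diagonal term takes the same value as a generic same-cluster pair and the clean counts above remain valid.
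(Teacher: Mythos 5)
Your proposal is correct and follows essentially the same route as the paper's proof: a case analysis of $L_{j\to k}+L_{k\to i}$ by cluster memberships, evaluating the $c\ts{th}$ order statistic (with the case split on $|X|\ge c$ and the observation $|Y|\le n/2 < c$ for $Y$), and then averaging over the previous leader $j$. You are in fact slightly more careful than the paper, which leaves the dominance $Q_c^{j\to i}\ge L_{j\to i}$ (so that the $\max$ collapses) and the diagonal term $j=i$ implicit in the cluster model; note also that the paper's proof contains a typo ($2\bY(\eps+l)$ in the $|X|<c$ case) that your derivation correctly avoids, matching the lemma statement.
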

\begin{proof}
    Consider $L_{j \to k} + L_{k \to i}$. There are three cases:
    \begin{enumerate}
    	\item If $j$ and $i$ are in different clusters, then this is $\eps + l$ (one hop within the cluster, the other hop between clusters).
    	\item If $j, i$, and $k$ are in the same cluster, then this is $2\eps$. 
    	\item If $j$ and $i$ are the in the same cluster while $k$ is in the other cluster, this is $2l$.
    \end{enumerate}
    Now, if $j$ and $i$ are in different clusters cluster, then $Q^{j \to i}_c = \eps + l$. Otherwise, if they are in the same cluster, it depends on whether that cluster has size at least $c$. If so, then $Q^{j \to i}_c=2\eps$; otherwise, $Q_c^{j \to i} = 2l$.
    
    Now, let $x$ be an arbitrary node in $X$. If $|X| \ge c$:
    $$
    n\overline{S}^x =\sum_{j} S^{j, x} = \sum_{j \in X} 2\eps + \sum_{j \in Y} \eps + l = |X|2\eps+|Y|(\eps+l)
    $$
    Dividing by $n$ yields the desired result. 
    If $|X| <c$, then $\overline{S}^x = 2\bX l+2\bY (\eps+l)$.
    Since $|X| \ge |Y|$ and $c > 0.5$, an arbitrary node $y \in Y$ only has one case: $\E[\overline{S}^y]=\bX(\eps + l) + 2\bY l$. 
\end{proof}

\begin{lemma} \label{lem:e-cluster}
    In the two cluster model, for $x \in X$ and $y \in Y$, the following holds:
    \[
      \E[\overline{E}^y_m] = \bX(l-\eps)
      \qquad \text{ and } \qquad 
      \E[\overline{E}_m^x] = \begin{cases}
        \bY(\eps -l) & |X| \ge m \\
        \bY(l-\eps)  & \text{otherwise}
      \end{cases}
    \]
\end{lemma}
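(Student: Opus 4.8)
The plan is to compute the $m$th order statistic of $\{L_{i \to k} - L_{j \to k}\}_k$ directly, exploiting the fact that in the two cluster model each difference depends only on the cluster memberships of $i$, $j$, and $k$, so the multiset takes at most a handful of distinct values. The first step is to dispose of the same-cluster previous leaders: if $i$ and $j$ lie in the same cluster, then for every $k$ the latencies $L_{i \to k}$ and $L_{j \to k}$ are equal (both $\eps$ if $k$ is in that cluster, both $l$ otherwise), so every difference is $0$ and hence $E^{i-j}_m = 0$. In particular the $j = i$ term is of this kind, and the entire same-cluster block contributes nothing to $\overline{E}^i_m = \frac{1}{n}\sum_{j \in [n]} E^{i-j}_m$. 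Only the cross-cluster previous leaders matter.

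For a cross-cluster pair I would fix $i$ and $j$ in opposite clusters and classify $k$. If $k$ is in $i$'s cluster then $L_{i \to k} = \eps$ and $L_{j \to k} = l$, giving $\eps - l < 0$; if $k$ is in $j$'s cluster the difference is $l - \eps > 0$. Thus, up to the two self-terms $k \in \{i,j\}$ (which, using $L_{i\to i}=0$, evaluate to $-l$ and $l$ and therefore sit at the two extreme positions), the multiset consists of exactly $|i\text{'s cluster}|$ copies of $\eps - l$ followed by $|j\text{'s cluster}|$ copies of $l - \eps$. Sorting in ascending order, the $m$th smallest value equals $\eps - l$ precisely when $m \le |i\text{'s cluster}|$ and equals $l - \eps$ when $m > |i\text{'s cluster}|$; note the threshold is the size of $i$'s \emph{own} cluster, since that is where the negative differences originate.

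The two claimed formulas then follow by plugging in cluster sizes and averaging over the cross-cluster previous leaders. For $i = y \in Y$, the relevant threshold is $|Y|$, and since $|Y| \le n/2 < 2n/3 = m$ we always have $m > |Y|$, so $E^{y-j}_m = l - \eps$ for each of the $|X|$ previous leaders $j \in X$; averaging gives $\overline{E}^y_m = \frac{|X|}{n}(l - \eps) = \bX(l - \eps)$. For $i = x \in X$, the threshold is $|X|$: when $|X| \ge m$ every cross-cluster term is $\eps - l$, yielding $\overline{E}^x_m = \frac{|Y|}{n}(\eps - l) = \bY(\eps - l)$, and when $|X| < m$ every cross-cluster term is $l - \eps$, yielding $\bY(l - \eps)$, which matches the stated piecewise expression with the split at $|X| \ge m$.

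The main obstacle is the correct identification of which of the two distinct difference values the $(2n/3)$th order statistic selects, i.e.\ tracking the sign pattern and the counts carefully and recognizing that the relevant comparison is against $i$'s own cluster size rather than $j$'s. Once that is settled, the remainder is bookkeeping; the only additional care needed is the brief remark that the self-latency terms $k \in \{i,j\}$ lie at the extremes of the sorted order and hence cannot influence an interior statistic like the $(2n/3)$th.
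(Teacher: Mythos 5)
Your proof is correct and takes essentially the same route as the paper's: the same three-way classification of the differences $L_{i \to k} - L_{j \to k}$ by cluster membership, the same identification of $E^{i-j}_m$ by comparing $m$ against the size of $i$'s \emph{own} cluster, and the same averaging over previous leaders (using $|Y| \le n/2 < m$ to collapse the $y \in Y$ case). Your additional remark about the self-terms $k \in \{i,j\}$ sitting at the extremes of the sorted multiset is a minor refinement the paper glosses over, and it does not change the argument.
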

\begin{proof}
    Consider $\{L_{i \to k} - L_{j \to k}\}_k$. There are three cases:
    \begin{enumerate}
        \item If $i,j$ are in the same cluster, then this is $0$, regardless of which cluster $k$ is in.
        \item If $i, j$ are in different clusters, and $k$ is in $i$'s cluster, then this is $\eps - l$.
        \item If $i, j$ are in different clusters, and $k$ is in $j$'s cluster, then this is $l - \eps$. 
    \end{enumerate}
    So, $E^{i-j}_m = 0$ if $i$ and $j$ are in the same cluster. Otherwise, if $i$'s cluster's size is at least $m$, then $E^{i-j}_m = \eps - l$; if the cluster is smaller, then $E_m^{i-j} = l - \eps$.

    Now, let $x \in X$ be arbitrary. If $|X| \ge m$:
    \begin{align*}
        n\overline{E}^x_m=\sum_{j}E^{x-j}_m = \sum_{j \in X}0 + \sum_{j \in B}\eps - l= |Y|(\eps - l)
    \end{align*}
    If $|X| < m$, then $\overline{E}^x_m = \bY(l-\eps)$.

    Since $|X| \ge |Y|$ and $m > 0.5$, an arbitrary node $y \in Y$ only has one case: $\overline{E}^y_m=\bX(l-\eps)$.     
\end{proof}
\begin{theorem}[cluster,\;$\sigma^*$]\label{thm:fairness-cluster}
    In the cluster model with deterministic latencies with $|X| > c = m$, the advantage is:
    \begin{align*}
        \textsc{Advantage}(\vec{\sigma_\text{early}}) 
        = \frac{2dl + \frac{\mu_0}{k \mu}- \frac{k-1}{k}(2\bX\eps + (1-\bX)(\eps+l))-(1-\bX)(\eps -l)}{2dl+\frac{\mu_0}{k \mu} 
        - \frac{k-1}{k}(\bX(\eps+l) + 2(1-\bX)l)-\bX(l-\eps)}
    \end{align*}
\end{theorem}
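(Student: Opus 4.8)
The plan is to reduce the $\max$/$\min$ over all $n$ players to a comparison between a single representative of each cluster, and then reuse the general utility-ratio expression already derived for the line model. First I would observe that, by the symmetry of the two cluster model, every node in $X$ experiences identical latency statistics, as does every node in $Y$; hence $\E[\overline{S}^i]$ and $\E[\overline{E}^i_m]$ take only two possible values, one for $x \in X$ and one for $y \in Y$, so $\max_i U_i$ and $\min_i U_i$ are attained at these representatives. Since the denominator of the time-averaged utility (the expected round duration $\frac{1}{n^2}\sum_{i,j}\E[S^{j,i}]$) is common to every player, it cancels in any utility ratio, so it suffices to work with the reward numerator $N_i = b_0 + \mu_0 - (b-\mu)\E[\overline{S}^i] - b\E[\overline{E}^i_m]$ from the line-model ratio.

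Next I would substitute the closed forms from \cref{lem:s-cluster,lem:e-cluster}. Because the hypothesis is $|X| > c = m$, the relevant cases give $\E[\overline{S}^x] = 2\bX\eps + \bY(\eps+l)$, $\E[\overline{E}^x_m] = \bY(\eps - l)$, $\E[\overline{S}^y] = \bX(\eps+l) + 2\bY l$, and $\E[\overline{E}^y_m] = \bX(l-\eps)$. To determine which representative is the $\max$ and which the $\min$, I would compute the two differences $\E[\overline{S}^y] - \E[\overline{S}^x] = (\bX+\bY)(l-\eps) = l-\eps$ and $\E[\overline{E}^y_m] - \E[\overline{E}^x_m] = (\bX+\bY)(l-\eps) = l-\eps$, both strictly positive since $l > \eps$. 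As both quantities enter $N_i$ with negative coefficients ($b-\mu = (k-1)\mu > 0$ for $k>1$, and $b>0$), this yields $N_y < N_x$, so the larger, better-connected cluster $X$ attains the maximum utility and $Y$ the minimum. Hence $\textsc{Advantage} = U_x / U_y = N_x / N_y$, in the ratio form used in \cref{thm:fairness-line}.

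Finally I would rewrite $N_x/N_y$ in the stated normalized form. Setting $b = k\mu$ and using $B(\tau) = 0$ gives $b_0 = b\tau = k\mu\tau$; the longest possible start time in this model is $2l$ (two cross-cluster hops, which dominates the cross-cluster quorum time $\eps + l$ precisely because $l > \eps$), so with timeout slack $d$ we take $\tau = 2dl$ and thus $b_0 = 2dl\,k\mu$. Substituting these, replacing $\bY$ by $1-\bX$, and dividing both $N_x$ and $N_y$ by $k\mu$ collapses the reward constants into the leading $2dl + \frac{\mu_0}{k\mu}$ terms and the $\frac{k-1}{k}$ coefficients, producing exactly the claimed expression.

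I expect the main obstacle to be the ordering step, namely certifying that the maximum and minimum utilities land in clusters $X$ and $Y$ respectively (and that this ordering does not flip for admissible parameters), together with pinning down the correct longest start time $2l$ that fixes $\tau = 2dl$; once these are settled, the remaining manipulation is a routine normalization by $k\mu$.
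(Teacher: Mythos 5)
Your proposal is correct and follows essentially the same route as the paper's proof: it invokes the utility-ratio formula with the round-duration denominator cancelled, substitutes the $|X| \ge c = m$ cases of \cref{lem:s-cluster,lem:e-cluster}, fixes $\tau = 2dl$ from the slowest start time $2l$ and hence $b_0 = 2k\mu dl$, and normalizes by $k\mu$. The only difference is that you explicitly certify the ordering (computing $\E[\overline{S}^y]-\E[\overline{S}^x] = \E[\overline{E}^y_m]-\E[\overline{E}^x_m] = l-\eps > 0$, so $U_x > U_y$), a step the paper leaves implicit; this is a harmless strengthening rather than a deviation.
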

\begin{proof}
    First, note that the slowest start time is simply $2l$; thus the timeout $\tau =2ld$. Since $B(\tau) = 0$, we get that $b_0-b\cdot 2dl=0$ or $b_0 = 2bdl$. 
    Further, let $b = k\mu$, so $b_0 = 2k{\mu}dl$.
    
    Recall that 
    \begin{align*}
        \frac{U_x}{U_y} 
        &= \frac{
            (b_0+\mu_0)
            -(b-{\mu})\E[\overline{S}^x]-b\E[\overline{E}_m^x]
            }
            {
            (b_0+\mu_0)
            -(b-\mu)\E[\overline{S}^y]-b\E[\overline{E}_m^y]
            } \\
        &= \frac{
            2k{\mu}dl 
            + \mu_0
            - (k-1) 
                \mu\E[\overline{S}^x]-k{\mu}\E[\overline{E}_m^x]
            }
            {
            2k{\mu}dl 
            + \mu_0
            - (k-1) 
                \mu\E[\overline{S}^y]-k{\mu}\E[\overline{E}_m^y]
            } \\
        &= \frac{
            2dl
            + \frac{\mu_0}{k \mu}
            -\frac{k-1}{k}\E[\overline{S}^x]
            -\E[\overline{E}_m^x]
            }
            {
            2dl
            +\frac{\mu_0}{k \mu}
            - \frac{k-1}{k}\E[\overline{S}^y]-\E[\overline{E}_m^y]
            }
    \end{align*}
    
    We know apply \cref{lem:s-cluster,lem:e-cluster} using the fact that $\bY = 1 - \bX$. 
    The denominator is always $2dl + \frac{\mu_0}{k \mu} - \frac{k-1}{k}(\bX(\eps+l) + 2(1-\bX)l)-\bX(l-\eps)$. 
    For the numerator, it depends on whether the relative magnitudes of $|X|, c$, and $m$. 
    Using the assumption that $|X| > c, m$, the ratio becomes:
    \begin{align*}
        \frac{U_x}{U_y} 
        &= 
        \frac{2dl + \frac{\mu_0}{k \mu}- \frac{k-1}{k}(2\bX\eps + (1-\bX)(\eps+l))-(1-\bX)(\eps -l)}{2dl+\frac{\mu_0}{k \mu} 
        - \frac{k-1}{k}(\bX(\eps+l) + 2(1-\bX)l)-\bX(l-\eps)}
    \end{align*}
\end{proof}

\subsection{Delayed equilibria as a baseline} \label{sec:delayed}

The following lemma will be useful for evaluating the utility ratios in the late-proposing profile.
\begin{lemma} \label{lem:late-ratio}
For the late strategy profile 
$\vec{\sigma_\text{late}}$,
\[
\frac{U_i(\vec{\sigma_\text{late}})}{U_j(\vec{\sigma_\text{late}})}
=\frac{(b_0+\mu_0) + \mu\tau - \mu\E[\overline{E}^{i}_c]}{(b_0+\mu_0) + \mu\tau - \mu\E[\overline{E}^{j}_c]}
\quad \text{where} \quad 
\overline{E}^i_c = \frac{1}{n} \sum_{k \in [n]} E^{i - k}_c
\]
\end{lemma}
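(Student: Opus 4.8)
The plan is to compute the time-averaged utility of each validator under the late-proposing profile $\vec{\sigma_\text{late}}$, and then form the ratio, observing that the per-round-duration denominator (the time normalization) cancels exactly as it did in the early-proposing case. First I would recall that under $\vec{\sigma_\text{late}}$, every leader proposes at the maximal delay $\Delta^* = \tau - S^{j,i} - E^{i-j}_c$, so that the $c\ts{th}$-slowest validator observes $i$'s block precisely when its local clock reads $\tau$. Consequently, when $i$ leads a round preceded by $j$, the \emph{realized} round duration is exactly $\Delta^* + S^{j,i} = \tau - E^{i-j}_c$, and the MEV collected is $M(\tau - E^{i-j}_c)$. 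Because the block reward is now static at $b_0$ (independent of the timeliness vote in the late setting), the per-round reward of leader $i$ is simply $M(\tau - E^{i-j}_c) + b_0$.

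\textbf{Aggregating over leader election.}
Next I would apply the Renewal–Reward theorem exactly as in the proof of \cref{thm:latency-NE}. Averaging the reward over the uniform choice of current and previous leader, the expected per-round reward of $i$ is $\frac{1}{n^2}\sum_j \left(\E[M(\tau - E^{i-j}_c)] + b_0\right)$. Specializing to the linear MEV form $M(x) = \mu x + \mu_0$ and using linearity of expectation gives $\frac{1}{n^2}\sum_j \left(\mu\tau - \mu\E[E^{i-j}_c] + \mu_0 + b_0\right)$, which by the definition $\overline{E}^i_c = \frac{1}{n}\sum_{k}E^{i-k}_c$ collapses to $\frac{1}{n}\left((b_0 + \mu_0) + \mu\tau - \mu\E[\overline{E}^i_c]\right)$. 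The expected round duration contributed by $i$'s leadership is the companion quantity $\frac{1}{n^2}\sum_j \E[\tau - E^{i-j}_c]$; summing over all leaders gives the total expected round duration as a constant common to every validator.

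\textbf{Forming the ratio.}
Finally, I would take the ratio $U_i/U_j$. Since both utilities are (per-round reward)/(expected round duration) and the expected round duration is identical for all validators (it is a single network-wide normalization, summed over all leader pairs), the denominators cancel in the ratio. What remains is precisely the ratio of the per-round rewards, yielding
\[
\frac{U_i(\vec{\sigma_\text{late}})}{U_j(\vec{\sigma_\text{late}})}
=\frac{(b_0+\mu_0) + \mu\tau - \mu\E[\overline{E}^{i}_c]}{(b_0+\mu_0) + \mu\tau - \mu\E[\overline{E}^{j}_c]},
\]
as claimed.

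\textbf{Main obstacle.}
The step that requires the most care is justifying that the realized round duration equals $\tau - E^{i-j}_c$ and that the shared denominator genuinely cancels. The subtlety is that the optimal-delaying oracle sets $\Delta^*$ using the $c\ts{th}$ order statistic (since only $c$ validators must avoid timing out), whereas the timeliness vote in the early setting used the $m\ts{th}$ statistic — here, with a static reward, the vote is irrelevant to the reward, so only the $c$-based duration matters. I would need to confirm that the time normalization is a single global average over all leader pairs (so it is genuinely $i$- and $j$-independent) rather than something that varies per validator, which is what makes the cancellation exact; this follows because the Renewal–Reward denominator sums $\E[S^{j,k}]$-type terms over \emph{all} ordered pairs, not just those involving $i$ or $j$.
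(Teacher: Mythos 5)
Your proof is correct and follows essentially the same route as the paper's: both compute the per-round reward under $\vec{\sigma_\text{late}}$ as $M(\tau - E^{i-j}_c) + b_0$ (using $\Delta^* = \tau - S^{j,i} - E^{i-j}_c$ so the realized duration is $\tau - E^{i-j}_c$), average over the uniform $1/n^2$ choice of leader pairs with the linear MEV form, and cancel the common round-duration normalization in the ratio. Your explicit justification of the cancellation and of why the $c\ts{th}$ (rather than $m\ts{th}$) order statistic governs the duration is left implicit in the paper but adds no new content.
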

\begin{proof}
    First consider the utility ratio of two validators $i$ and $j$.
    Since we are in the late-proposing case, every validator will propose at $\Delta^*$.
    Since $i$ is elected leader preceded by some $j$ with $1/n^2$ probability,
    the expected reward is $\E[\frac{1}{n^2}\sum_j M(\tau - E^{i - j}_c) + \frac{1}{n}b_0]
    = \frac{1}{n}(b_0 + \mu_0 + \mu\tau) - \mu \frac{1}{n^2}\sum_j \E[E^{i - j}_c])$.
    Hence,
    \begin{align*}
        \frac{U_i(\vec{\sigma_\text{late}})}{U_j(\vec{\sigma_\text{late}})}
        &= \frac{(b_0+\mu_0) + \mu\tau - \mu\E[\overline{E}^{i}_c]}{(b_0+\mu_0) + \mu\tau - \mu\E[\overline{E}^{j}_c]}
    \end{align*}
\end{proof}
\subsubsection{Late proposing on a line.} We now look at the line model.
\begin{lemma} \label{lem:fairness-late-line}
For the late strategy profile $\vec{\sigma_\text{late}}$,
the advantage in the deterministic 
line model is given by the following:
    \[
    \textsc{Advantage}(\vec{\sigma_\text{late}})
    = \frac{\frac{b_0+\mu_0}{\mu n} + d - \left( 
            \frac{1}{36} - \frac{2}{3n} \right)}
                {\frac{b_0+\mu_0}{\mu n} + d - \left( 
            \frac{7}{18} - \frac{5}{6n} \right)} 
    \]
\end{lemma}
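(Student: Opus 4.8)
The plan is to reduce the statement to a direct substitution into the late-proposing utility ratio already established in \cref{lem:late-ratio}, reusing the order-statistic computations from the early-proposing analysis. By \cref{lem:late-ratio}, for any two validators $i$ and $j$ the ratio of their time-averaged utilities under $\vec{\sigma_\text{late}}$ is $\frac{(b_0+\mu_0)+\mu\tau-\mu\E[\overline{E}^{i}_c]}{(b_0+\mu_0)+\mu\tau-\mu\E[\overline{E}^{j}_c]}$, so a validator's utility is strictly decreasing in its average measurement error $\E[\overline{E}^{i}_c]$. Hence the extremal ratio defining the advantage is achieved by pairing the validator with the smallest $\E[\overline{E}^{i}_c]$ (numerator) against the one with the largest $\E[\overline{E}^{i}_c]$ (denominator).

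First I would identify these two extreme validators. By the same symmetry and monotonicity of the deterministic line latencies used in the early-proposing case, the central node $n/2$ minimizes the average latency differential while the endpoint node $0$ maximizes it; I would state this as the observation that $\E[\overline{E}^{i}_c]$ is monotone in the node's distance from the center, so that $\max_i U_i = U_{n/2}$ and $\min_i U_i = U_0$.

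Next I would plug in the concrete order-statistic values. The key point is that the line model fixes $c = m = 2n/3$, so $\overline{E}^{i}_c = \overline{E}^{i}_m$ as random variables, and \cref{lemma:q} applies verbatim: $\E[\overline{E}^{0}_c] = \frac{7n-15}{18}$ and $\E[\overline{E}^{n/2}_c] = \frac{n}{36}-\frac{2}{3}$. Substituting these into the ratio $U_{n/2}/U_0$ expresses the advantage in terms of $b_0,\mu_0,\mu,\tau$ and these two averages.

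Finally I would normalize using the timeout convention $\tau = dn$ (the timeout is a factor $d$ above the slowest start time $n-1\approx n$, exactly as in the proof of \cref{thm:fairness-line}), then divide numerator and denominator by $\mu n$. This turns $\mu\tau/(\mu n)$ into $d$, $(b_0+\mu_0)/(\mu n)$ into the leading term, and the two error terms into $\frac{1}{36}-\frac{2}{3n}$ and $\frac{7}{18}-\frac{5}{6n}$ respectively, yielding the displayed expression. There is no genuine obstacle here: the only points requiring care are justifying that the center/endpoint pair realizes the extremal ratio (reusing the earlier monotonicity observation) and recognizing that $c=m$ lets one reuse \cref{lemma:q} without any fresh computation.
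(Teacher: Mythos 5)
Your proposal is correct and matches the paper's proof essentially step for step: both apply \cref{lem:late-ratio}, take the center node $n/2$ versus the endpoint node $0$ as the extremal pair, exploit $m=c$ to reuse the error computations of \cref{lemma:q}, set $\tau = dn$, and simplify by dividing through by $\mu n$. Your explicit remark that the utility is decreasing in $\E[\overline{E}^{i}_c]$ is a slightly more careful justification of the extremal pair than the paper's one-line assertion, but the argument is the same.
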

\begin{proof}
    In the line model, since latencies are proportional to distance, the worst case ratio is then 
    the ratio between the player in the middle and the player on the edge. 
    Note that the longest start time is $S^{0,n-1} = n-1 \approx n$. Hence, $\tau = dn$.
    We also note that as $m=c$ we can use \cref{lemma:q}.
    Now, using \cref{lem:late-ratio},
    \begin{align*}
        \frac{(b_0 + \mu_0) + \mu\tau - \mu\left( 
            \frac{n}{36} - \frac{2}{3} 
            \right)}
             {(b_0 + \mu_0) + \mu\tau - \mu\left( 
            \frac{7n}{18} - \frac{5}{6}  
            \right)}
        &= \frac{\frac{b_0+\mu_0}{n} + \mu d 
            - \mu\left( 
            \frac{1}{36} - \frac{2}{3n} 
            \right)}
                {\frac{b_0+\mu_0}{n} + \mu d - \mu\left( 
            \frac{7}{18} - \frac{5}{6n} \right)} \\
        &= \frac{\frac{b_0+\mu_0}{\mu n} + d - \left( 
            \frac{1}{36} - \frac{2}{3n} \right)}
                {\frac{b_0+\mu_0}{\mu n} + d - \left( 
            \frac{7}{18} - \frac{5}{6n} \right)} 
    \end{align*}
\end{proof}
\subsubsection{Late proposing in clusters.} Now consider two clusters.
\begin{lemma} \label{lem:fairness-late-cluster}
For the late strategy profile $\vec{\sigma_\text{late}}$ with $|X| > c$,
the advantage in the deterministic cluster model is given by the following:
    \[
    \textsc{Advantage}(\vec{\sigma_\text{late}})
    = \frac{(b_0 + \mu_0) + 2dl\mu - \mu ((1 - \bX) (\eps - l))}
                    {(b_0 + \mu_0) + 2dl\mu - \mu (\bX(l - \eps))}
    \]
\end{lemma}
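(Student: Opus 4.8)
The plan is to combine the general late-proposing utility-ratio formula from \cref{lem:late-ratio} with the cluster-model order-statistic computations from \cref{lem:e-cluster}, mirroring the structure of the early-proposing proof of \cref{thm:fairness-cluster}. First I would invoke \cref{lem:late-ratio}, which states that for any two validators the utility ratio under $\vec{\sigma_\text{late}}$ is $\frac{(b_0+\mu_0)+\mu\tau-\mu\E[\overline{E}^i_c]}{(b_0+\mu_0)+\mu\tau-\mu\E[\overline{E}^j_c]}$. Since the time-averaging denominator already cancels in that lemma, the entire problem reduces to identifying the validators achieving the maximum and minimum utility and substituting the corresponding values of $\E[\overline{E}_c]$.

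Next I would determine which node attains the extremal utilities. Because $c=m$ in this model, $\overline{E}_c=\overline{E}_m$, so I can read the needed quantities directly from \cref{lem:e-cluster}: for $x\in X$, using the branch $|X|\ge m$ (valid since $|X|>c=m$ by hypothesis), we have $\E[\overline{E}^x_c]=\bY(\eps-l)=(1-\bX)(\eps-l)$, while for $y\in Y$ we have $\E[\overline{E}^y_c]=\bX(l-\eps)$. Since $l>\eps$, the former is negative and the latter positive; as the per-round reward $(b_0+\mu_0+\mu\tau)-\mu\E[\overline{E}^i_c]$ is \emph{decreasing} in $\E[\overline{E}^i_c]$, the node $x$ in the larger cluster attains the maximum utility and the node $y$ in the smaller cluster the minimum. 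Hence $\textsc{Advantage}(\vec{\sigma_\text{late}})=U_x/U_y$, following the convention of the preceding fairness lemmas.

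Finally I would substitute the timeout value. As established in the proof of \cref{thm:fairness-cluster}, the slowest start time in the cluster model is $2l$, so the timeout-slack convention $\tau=d\cdot(\text{slowest start time})$ gives $\tau=2dl$. Plugging this and the two order-statistic values into the ratio yields exactly
\[
\frac{(b_0+\mu_0)+2dl\mu-\mu((1-\bX)(\eps-l))}{(b_0+\mu_0)+2dl\mu-\mu(\bX(l-\eps))},
\]
as claimed. I expect no genuine obstacle: the computation is mechanical once \cref{lem:late-ratio,lem:e-cluster} are in hand. The only points demanding care are the sign analysis pinning down which cluster's node is the maximizer versus the minimizer (getting this backwards would invert the ratio), and verifying that the hypothesis $|X|>c$ selects the correct $|X|\ge m$ branch of \cref{lem:e-cluster}.
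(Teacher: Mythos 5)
Your proposal is correct and follows essentially the same route as the paper's proof: invoke \cref{lem:late-ratio}, use $c=m$ to read $\E[\overline{E}^x_c]=\bY(\eps-l)$ and $\E[\overline{E}^y_c]=\bX(l-\eps)$ from \cref{lem:e-cluster} (with the $|X|\ge m$ branch selected by $|X|>c$), set $\tau=2dl$, and substitute. Your explicit sign analysis identifying $x\in X$ as the maximizer is a slightly more careful justification of a step the paper merely asserts ($|X|>|Y|$ implies $x$ dominates $y$), but it is the same argument.
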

\begin{proof}
    In the cluster model, we know that $|X| > |Y|$ and therefore $x \in X$ will have utility that always dominates 
    $y \in Y$. Note that the slowest start time is also $2l$ so $\tau = 2dl$ and we calculated $\E[\overline{E}^i_c]$ in 
    \cref{lem:e-cluster}. Using \cref{lem:late-ratio},
    \begin{align*}
        \frac{U_x}{U_y} 
        &= \frac{(b_0 + \mu_0) + 2dl\mu - \mu \E[\overline{E}^x_c]}
                {(b_0 + \mu_0) + 2dl\mu - \mu \E[\overline{E}^y_c]}
        = \frac{(b_0 + \mu_0) + 2dl\mu - \mu \E[\overline{E}^x_c]}
                {(b_0 + \mu_0) + 2dl\mu - \mu \bX(l-\eps)}
    \end{align*}
    The value of $\E[\overline{E}^x_c]$ depends on the size of $X$. Since $|X|>c$,
    advantage is:
    \[
    \frac{U_x}{U_y} = \frac{(b_0 + \mu_0) + 2dl\mu - \mu (\bY (\eps - l))}
                    {(b_0 + \mu_0) + 2dl\mu - \mu (\bX(l - \eps))}
    \]
\end{proof}
\begin{theorem}[line and cluster,\;$\sigma_\text{late}$] \label{thm:fairness-late}
    In the deterministic line model, the unfairness can be quantified as, 
    \[
        \textsc{Advantage}(\vec{\sigma_\text{late}}) = 
        \frac{\frac{b_0+\mu_0}{\mu n} + d - \left( 
            \frac{1}{36} - \frac{2}{3n} \right)}
                {\frac{b_0+\mu_0}{\mu n} + d - \left( 
            \frac{7}{18} - \frac{5}{6n} \right)} 
    \]
    and in the deterministic cluster model with $|X| > c$, the unfairness can be quantified as,
    \[
        \textsc{Advantage}(\vec{\sigma_\text{late}}) = \frac{(b_0 + \mu_0) + 2dl\mu - \mu ((1 - \bX) (\eps - l))}
                    {(b_0 + \mu_0) + 2dl\mu - \mu (\bX(l - \eps))}
    \]
\end{theorem}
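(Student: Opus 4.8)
The plan is to recognize that this theorem merely collects, into a single statement, the two late-proposing advantage formulas already established for the line and cluster models, so the real work reduces to computing one utility ratio in each setting. The common starting point is the general ratio formula for $\vec{\sigma_\text{late}}$ (\cref{lem:late-ratio}): since every leader delays maximally to $\Delta^* = \tau - S^{j,i} - E^{i-j}_c$, the per-round MEV becomes $M(\tau - E^{i-j}_c) = \mu\tau - \mu E^{i-j}_c + \mu_0$ and the block reward collapses to the constant $b_0$. Averaging the reward over the uniform leader election (each ordered pair $(j,i)$ with probability $1/n^2$) and observing that the time-denominator of the time-averaged utility is common to all players and hence cancels in any ratio, I obtain
\[
\frac{U_i(\vec{\sigma_\text{late}})}{U_j(\vec{\sigma_\text{late}})}
= \frac{(b_0+\mu_0) + \mu\tau - \mu\,\E[\overline{E}^{i}_c]}{(b_0+\mu_0) + \mu\tau - \mu\,\E[\overline{E}^{j}_c]},
\]
so the only model-dependent input is the average latency differential $\E[\overline{E}^i_c]$ at the extremal validators.

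Next I would specialize to the line model. Since the utility is decreasing in $\E[\overline{E}^i_c]$, the maximizer is the central node $n/2$ (smallest differential) and the minimizer is the endpoint $0$ (largest), so $\max_i U_i/\min_i U_i = U_{n/2}/U_0$. Because $m=c$ here, the needed values $\E[\overline{E}^0_c]$ and $\E[\overline{E}^{n/2}_c]$ are exactly those computed in \cref{lemma:q}. Substituting the slowest start time $S^{0,n-1}=n-1$ to set $\tau = dn$ and dividing numerator and denominator by $\mu n$ collapses the ratio to the stated line-model expression; this is pure substitution once \cref{lemma:q} is in hand.

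For the cluster model, the hypothesis $|X| > c$ together with $l > \eps$ forces $\E[\overline{E}^x_c] = (1-\bX)(\eps-l) < 0 < \bX(l-\eps) = \E[\overline{E}^y_c]$, so the larger cluster dominates and the advantage is $U_x/U_y$ for arbitrary $x \in X$, $y \in Y$. These differentials come from \cref{lem:e-cluster} (again using $m=c$); setting the slowest start time to $2l$ gives $\tau = 2dl$, and plugging in yields the stated cluster-model expression.

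The substance of the argument lives in the supporting results—the ratio formula \cref{lem:late-ratio} and the order-statistic computations \cref{lemma:q,lem:e-cluster}—rather than in the theorem itself, which is essentially an assembly step with no genuine obstacle. The one point I would verify carefully is the monotonicity claim that reduces the global $\max$-over-$\min$ advantage to a single ratio: namely that utility is monotone decreasing in $\E[\overline{E}^i_c]$ and that the extremal validators are the center/endpoint on the line and the $X$/$Y$ representatives in the clusters. This rests on the sign of $l-\eps$ and the ordering assumptions $|X| \ge |Y|$ and $|X| > c$, and it is what justifies the clean two-validator comparison in each model.
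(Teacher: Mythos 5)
Your proposal is correct and follows essentially the same route as the paper: the paper proves this theorem by directly citing its Lemmas~\ref{lem:fairness-late-line} and~\ref{lem:fairness-late-cluster}, which are themselves proved exactly as you describe---applying the ratio formula of \cref{lem:late-ratio}, reusing the order-statistic computations of \cref{lemma:q} (valid since $m=c$) and \cref{lem:e-cluster}, and substituting $\tau = dn$ on the line and $\tau = 2dl$ in the cluster model. Your explicit check of which validators are extremal (center vs.\ endpoint; $x \in X$ vs.\ $y \in Y$ under $|X| > c$ and $l > \eps$) matches the brief justifications the paper gives inside those two lemmas.
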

\begin{proof}
    The result follows directly from \cref{lem:fairness-late-line,lem:fairness-late-cluster}.
\end{proof}

\def\bzero{0.038}   %
\def\mzero{0.005}   %
\def\dpar{5}      %
\def\mpar{6e-6}     %
\def\kpar{1.1}      %
\def\ppar{0.1}      %
\def\Xpar{0.67}      %
\def\xmin{0.05}
\def\xmaxx{100000}
\def\xmax{10000}
\newcommand{\D}[1]{((\bzero+\mzero)/(\mpar*#1))}          %
\newcommand{\B}[1]{(\dpar + \mzero/(\kpar*\mpar*#1))}     %

\pgfplotsset{
  myaxis/.style={
    width=\linewidth, height=5.2cm,
    axis lines=left, grid=both, minor grid style={opacity=.3},
    samples=100,
    scaled y ticks=false,
    yticklabel=\pgfmathprintnumber{\tick}\%, %
    restrict y to domain=-100:100
  }
}
\begin{figure}[htp]
\centering
\begin{subfigure}[t]{0.48\textwidth}
\centering
\begin{tikzpicture}
\begin{axis}[myaxis, domain=\xmin:\xmaxx, xlabel={$n$}, ymax=20]
  \addplot[very thick, red]
    { 100 * (  \D{x} + \dpar - ( 1/36 + 2/(3*x) )  )
      / (  \D{x} + \dpar - ( 7/18  - 5/(6*x) ) ) - 100 };
  \addlegendentry{Late}
  \addplot[very thick, blue]
    { 100 * (  \B{x} - ((\kpar-1)/\kpar)*(25/36 - 2/(3*x)) - 1/36 - 2/(3*x) )
      / (  \B{x} - ((\kpar-1)/\kpar)*(17/18 - 7/(6*x)) - 7/18 - 5/(6*x) )
      - 100 };
  \addlegendentry{Early}
\end{axis}
\end{tikzpicture}
\caption{Line model.}
\end{subfigure}
\hfill
\begin{subfigure}[t]{0.48\textwidth}
\centering
\begin{tikzpicture}
\begin{axis}[myaxis, domain=\xmin:\xmax, xlabel={$l$}, ymax=20]
  \addplot[very thick, green!70!black]
    { 100 * ( (\bzero+\mzero) + 2*\dpar*x*\mpar - \mpar*((1-\Xpar)*(\ppar - x)) )
      / ( (\bzero+\mzero) + 2*\dpar*x*\mpar - \mpar*(\Xpar*(x - \ppar)) )
      - 100 };
  \addlegendentry{Late}
  \addplot[very thick, purple]
    { 100 * ( 2*\dpar*x + \mzero/(\kpar*\mpar)
        - ((\kpar-1)/\kpar)*( 2*\Xpar*\ppar + (1-\Xpar)*(\ppar + x) )
        - (1-\Xpar)*(\ppar - x) )
      / ( 2*\dpar*x + \mzero/(\kpar*\mpar)
        - ((\kpar-1)/\kpar)*( \Xpar*(\ppar - x) + 2*(1-\Xpar)*x )
        - \Xpar*(x - \ppar) )
      - 100 };
  \addlegendentry{Early}
\end{axis}
\end{tikzpicture}
\caption{Cluster model.}
\end{subfigure}
\caption{Advantage comparison between early and late strategy profiles. Plotted expressions are from \cref{thm:fairness-line,thm:fairness-cluster,thm:fairness-late}. The parameters are set such that $\mu_0 = 0.005, b_0 = 0.038, d=5,\mu=6\times10^{-6},k=1.5,\eps=0.1,\bX=0.67$.}
 \label{fig:fairness-plots}
\end{figure}
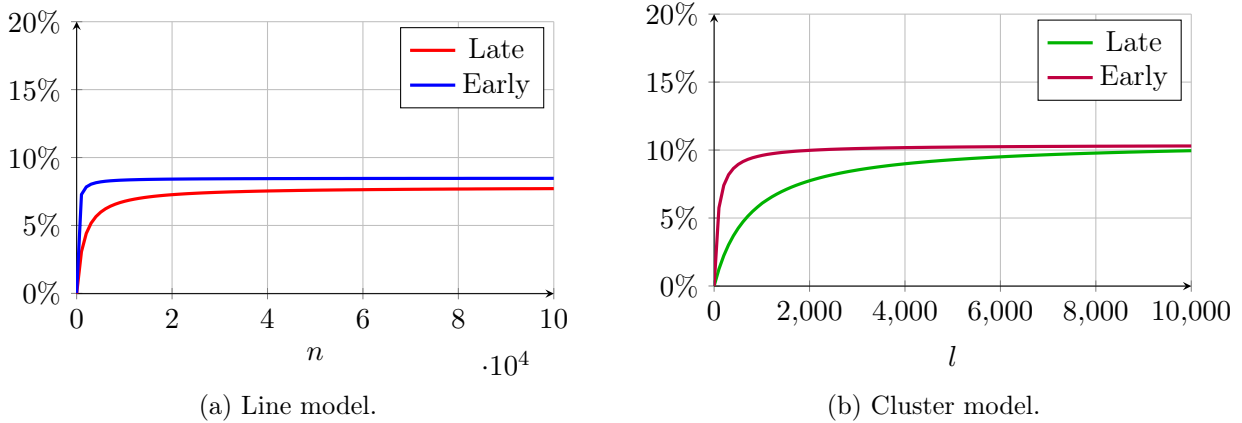

    \section{Simulation details} \label{appendix:sim}
\ifdeferproofs \crefalias{appendix:sim}{appendix} \fi

\subsubsection{World latency simulation.}
For the world latency model, we use a lognormal where the underlying normal is distributed with 
mean $1$ and std $0.5$ within cities. Between cities, we fix a std of $0.8$
and the log of the table below as the mean.\footnote{Taken from \url{https://wondernetwork.com/pings/} on 13-08-2025.}
\sisetup{
  table-number-alignment = center,
  table-figures-integer = 3,
  table-figures-decimal = 3
}

\begin{table}[H]
    \centering
    \caption{List of cities for the world map simulation.}
    \begin{tabularx}{\textwidth}{@{}*{5}{>{\centering\arraybackslash}X}@{}}
    Amsterdam & Dublin & Frankfurt & Los Angeles & New York \\
    Paris     & Seoul  & Sydney    & Vienna      & Warsaw   \\
    Zurich    &        &           &             &          \\
    \end{tabularx}
\end{table}

\newcommand{\city}[1]{\multicolumn{1}{c}{#1}}
\newcolumntype{M}{S[table-format=3.2,round-mode=places,round-precision=2]<{\;\si{ms}}}
\begin{table}[H]
\centering
\caption{Inter-city latency (ms)}
\scriptsize
\setlength{\tabcolsep}{8pt}
\begin{tabular}{l *{6}{M}}
\toprule
& \city{Amsterdam} & \city{Dublin} & \city{Frankfurt} & \city{Los~Angeles} & \city{New~York} & \city{Paris} \\
\midrule
Amsterdam   & 0.0    & 18.898 & 8.758  & 151.911 & 77.245 & 35.695 \\
Dublin      & 18.929 & 0.0    & 22.573 & 148.112 & 79.363 & 17.038 \\
Frankfurt   & 8.788  & 22.438 & 0.0    & 157.842 & 83.328 & 15.538 \\
Los~Angeles & 151.8  & 148.003 & 157.759 & 0.0    & 68.998 & 144.772 \\
New~York    & 77.233 & 79.835 & 83.573 & 68.949 & 0.0    & 72.738 \\
Paris       & 35.491 & 17.690 & 15.341 & 144.803 & 72.775 & 0.0    \\
Seoul       & 216.865 & 248.608 & 222.590 & 134.224 & 202.454 & 310.963 \\
Sydney      & 252.216 & 258.316 & 267.893 & 160.498 & 200.354 & 242.455 \\
Vienna      & 18.625 & 33.620 & 24.064 & 168.068 & 94.419 & 38.412 \\
Warsaw      & 22.942 & 43.371 & 22.771 & 171.280 & 95.172 & 33.041 \\
Zurich      & 16.290 & 35.949 & 7.004  & 159.108 & 104.166 & 18.162 \\
\bottomrule
\end{tabular}
\end{table}
\begin{table}[H]
\centering
\caption{Inter-city latency (ms)}
\scriptsize
\setlength{\tabcolsep}{8pt}
\begin{tabular}{l *{5}{M} S[table-format=2.0]}
\toprule
& \city{Seoul} & \city{Sydney} & \city{Vienna} & \city{Warsaw} & \city{Zurich} & {$w_i$} \\
\midrule
Amsterdam   & 216.451 & 252.287 & 18.933 & 23.054 & 16.311 & 18 \\
Dublin      & 248.549 & 259.371 & 33.799 & 43.349 & 35.984 & 17 \\
Frankfurt   & 222.781 & 267.780 & 24.022 & 22.531 & 7.014  & 17 \\
Los~Angeles & 134.145 & 160.565 & 168.401 & 171.428 & 159.026 & 8 \\
New~York    & 202.481 & 200.356 & 95.397 & 95.097 & 104.104 & 8 \\
Paris       & 309.104 & 242.421 & 38.392 & 33.004 & 18.147 & 11 \\
Seoul       & 0.0     & 137.450 & 235.099 & 237.825 & 259.888 & 5 \\
Sydney      & 137.553 & 0.0     & 260.271 & 275.894 & 276.444 & 6 \\
Vienna      & 235.010 & 260.584 & 0.0    & 27.633 & 19.382 & 7 \\
Warsaw      & 237.683 & 275.929 & 27.679 & 0.0    & 23.273 & 3 \\
Zurich      & 259.767 & 276.520 & 19.421 & 23.304 & 0.0    & 15 \\
\bottomrule
\end{tabular}
\end{table}

The weight indicates how many nodes are in that city out of 100.

    \section{Inflation and griefing incentives} \label{appendix:inflation}
In our standard model, the block reward is \textit{printed} by the protocol, and thus is a form of inflation. Thus, there is an incentive to lower others' rewards, as that increases the value of one's own rewards. We refer to this as a \textit{griefing} incentive. Here, we argue that this incentive would not significantly change the results.

First, for the Nash equilibrium without coalitions, note that a validator can lower other validator's rewards, on average, by always timing out (or voting infinity). This would lower the leader's reward if and only if the dishonest validator was one of the $m$ earlier validators to receive the leader's proposal. However, with large $n$, any one validator would have a very small impact on the aggregate measurement -- the leader would receive the $(m+1)$-smallest vote instead of the $m$-smallest vote. So, adding inflation to our model would result in an $\eps$-BNE here, where $\eps$ would be very small. 

Second, small coalitions would have a larger ability to affect others' rewards. If they vote $0$ on coalition leaders and vote $\infty$ for non-coalition leaders, they would lower the honest leaders' reward from the $m$th smallest vote to the $(m+|C|)$th smallest vote (among honest validators). This would shift rewards from non-coalition members to coalition members. However, the optimal proposal strategy, for both coalition and non-coalition leaders, is still to propose early -- this follows directly from the time-decreasing property. So, responsiveness is still intact.

Third, large coalitions would be able to completely censor non-coalition leaders. While this is problematic, we note that the same problem occurs in existing system (i.e., with static block rewards). It would always be in a large coalition's interest to deny rewards to non-coalition members, whether they are dynamic or static.

Finally, the fairness ratios would be unchanged, as the net inflation would devalue all profits by the same multiplicative factor.

    \section{Alternative incentive knob: changing leader election} 
\label{apx:leader} 
\ifdeferproofs \crefalias{apx:leader}{appendix} \fi
We now consider varying the other incentive knob: the leader schedule.
We think that this alternative might be of practical interest for 
ease of implementation.

\subsubsection{Incentive knobs.}
As we have discussed so far, once the timeliness of a protocol is measured, there are two 
aspects of the protocol that can be used by the mechanism designer to set the incentives. 
Both of these result in a rich incentive structure that we analyze separately in this paper.
\begin{itemize}
    \item \emph{Block reward:} The block reward can be a decreasing function of the proposal duration. 
    Intuitively, if the initial reward is high enough compared to MEV gain, the incentives shift to 
    early proposal.
    \item \emph{Leader probabilities:} We explore designing a leader election schedule such that for each 
    player, the later the proposal is, the next leader election probability is proportionally reduced. 
    The result is a direct reduction in future rewards with the goal of incentivizing early proposals.
\end{itemize}

\subsubsection{Leader weights.} Each player now has a weight variable $w_i$.
Intuitively, our goal is to reduce the weight of player $i$ proportional to how 
late it was to propose in its previous proposal.
Initially, the weights are equal to 1 so that each player is selected with probability 
$1/n$ (Note that our model normalizes by stake, and consider coalitions to account for stake distributions).
After each proposal, we update the leader's weight as a function of it's measured proposal duration
using a weight function $\mathcal{W} : [0, \tau] \to [0,1]$.
Then, the election probability for each $i$ is $w_i/\sum_j w_j$.

\subsubsection{Stake transfer.} We do not explicitly model the mechanisms that allow exiting and 
entering the system. Clearly, if a staked entity is reduced to 0 weight, it is not permanent 
as they are able to withdraw their stake and re-stake at a new keypair to start over.
In practice, implementations could choose to start new stake at lower weight and 
have a delay on withdrawals in order to  
prevent incentives to propose late and avoid punishment by withdrawing.

\subsubsection{Choosing the weight function.}
While the design space for selecting the weight function is quite large, we would like 
the weight function to not introduce strong incentives to run more validators. 
Intuitively, it shouldn't matter if you are running a single validator with more stake 
compared to lots of validators with the minimum amount of stake. Since our model 
considers each entity as a single unit of stake anyway, we capture this notion by 
designing the weight function to apply an \emph{additive} effect on previous weight.
Hence, no matter how large your group is as a coalition, for an action $\Delta$, 
you get the same effect of $+ \W(\Delta)$, effectively reducing the consideration to 
total weight.

\newcommand{\changed}[1]{\textbf{#1}}
\subsubsection{The timing game with leader decay.} The changes from the timing game in \cref{sec:model}
are in bold.
\begin{gamebox}[]{timing game with leader decay}
\begin{enumerate}
    \item Let $R$ be the current round, and suppose $i$ is the current leader and $j$ led the previous round. Let $t_{R-1}$ represent the time when round $R-1$ ended, i.e., when $j$ broadcasted their block. Let $s^{R-1}$ denote the time it takes between when $j$ broadcasts their block and when $i$ receives the block and its quorum. $i$ learns their maximum delay $\Delta^*_R$ and chooses a delay $\Delta \in [0, \Delta^*_R]$.
    \item Let $k$ be an arbitrary validator. 
        \begin{enumerate}
            \item The latency $l_{i \to k} \sim L_{i \to k}$ is sampled.
            \item $k$ receives the block at time $t_{k}^R = t_{R-1} + s^{R-1}+ \Delta + l_{i \to k}$. 
            \item $k$ votes according to their voting strategy, which is a function of the block times they have seen. So $v_k^R =\sigma^V_k(\mathbf{t}_k^{1:R}) \in \mathbb{R}^+$. $k$ then broadcasts\footnote{Depending on the protocol, $k$ may know the next leader and send their vote only to them, but this detail is not relevant in our model.} their vote.\footnote{As discussed earlier, in practice, one would decouple consensus from the duration voting, by only voting ``yes'', ``no'', or ``timeout'' in the critical path, and submitting duration votes on-chain for later aggregation. For the purposes of this model, we do not make this distinction.} 
        \end{enumerate}
    \item Let $v_m^*$ refer to the $m\ts{th}$ smallest vote in the 
    (multi)set $\{v_k^R\}_k$. $i$ receives utility $M(\Delta+s^{R-1}) 
    + B(v_m^*)$, where $M$ denotes the MEV and $B$ the block reward. 
    \item \changed{Update the weight of the leader by \boldmath$w_{i} := w_{i} + \W(v_m^*)$.}
    \item The next leader is randomly sampled \changed{from the distribution \boldmath$\{w_j / \sum_k w_k \}_j$}
    and the game repeats from 1.
\end{enumerate}
The first round of this game is slightly different. 
\changed{Initialize weights \boldmath$w_i = 1$ for all \boldmath$i \in [n]$.}
The game starts at time $t = 0$, and so $s^{0} = 0$, i.e., the leader $i$ can propose as soon as the game starts. 
As a result, validator $k$ receives the block at time $t_k^1 = \Delta + l_{i \to k}$. Otherwise, the first round matches the others.
\end{gamebox}
Let $\sigma^*$ denote the strategy profile, in which players report honestly,
each leader $i$ proposes with $\Delta=0$. 
Similar to last time, we refer to $\sigma^*$ as the \emph{early-proposing truthful} strategy profile.

\subsection{Equilibria under binary state}

As an instructive example, consider a step function as leader weight that oscillates weights between high weight
and low weight. Nodes never go below $1-\rho$, and above $1$. If the step function is simply $\pm \rho$ 
we get a two state Markov process to analyze.
Formally, let $\mathcal{T}$ be a target time such that, intuitively, most quorums can be formed before $\mathcal{T}$.
Let $\rho$ be weight function parameter such that if people propose before $\mathcal{T}$, we apply $+\rho$,
and anything later than $\mathcal{T}$, we apply $-\rho$.
We call this the \emph{binary weight} model.

We will begin by analyzing the early-proposing truthful strategy $\sigma^*$.
A key probability due to latencies is the probability that the reported delay is above $\mathcal{T}$,
hence resulting in a penalty. For player $i$, we will represent this probability with
$p_i := \Pr[S^{j,i} + E^{i-j}_m > \mathcal{T}]$.
\begin{figure}[htbp]
  \centering
  \begin{subfigure}[b]{0.48\textwidth}
    \centering
    \begin{tikzpicture}[scale=1.2, > = Stealth]
      \def\rhoval{1.0}   %
      \def\tauloc{2.0}   %
      \def\L{4.5}        %
      \draw[->] (-0.3,0) -- (\L+0.4,0) node[below] {};
      \draw[->] (0,-1.4*\rhoval) -- (0,1.4*\rhoval) node[left] {};
      \draw[thick] 
            (0,\rhoval) -- (\tauloc,\rhoval) -- (\tauloc,-\rhoval) -- (\L,-\rhoval);
      \node[left]  at (0,\rhoval)  {$+\rho$};
      \node[left]  at (0,-\rhoval) {$-\rho$};
      \draw[dashed] (\tauloc,0) -- (\tauloc,-\rhoval);
      \fill (\tauloc,0) circle (1pt) node[below right] {$\mathcal{T}$};
      \fill (\L,0) circle (1pt) node[below right] {$\Delta^{+}$};
    \end{tikzpicture}
    \caption{Step function $\W(\Delta)=\rho\,\mathrm{sgn}(\mathcal{T}-\Delta)$.}
    \label{fig:step}
  \end{subfigure}
  \hfill
  \begin{subfigure}[b]{0.48\textwidth}
    \centering
    \begin{tikzpicture}[>=stealth, node distance=3cm,
                        every state/.style={thick, minimum size=35pt}]
      \node[state] (q0) {$1 - \rho$};
      \node[state, right of=q0] (q1) {$1$};
      \path[->, thick] (q0) edge[bend left]  node[above] {low latency} (q1)
                            edge[loop left]  node[left]  {} (q0)
                       (q1) edge[bend left]  node[below] {high latency} (q0)
                            edge[loop right] node[right] {} (q1);
    \end{tikzpicture}
    \caption{Possible weights of player $i$.}
    \label{fig:automaton}
  \end{subfigure}

  \caption{Binary weight model.}
  \label{fig:step+automaton}
\end{figure}
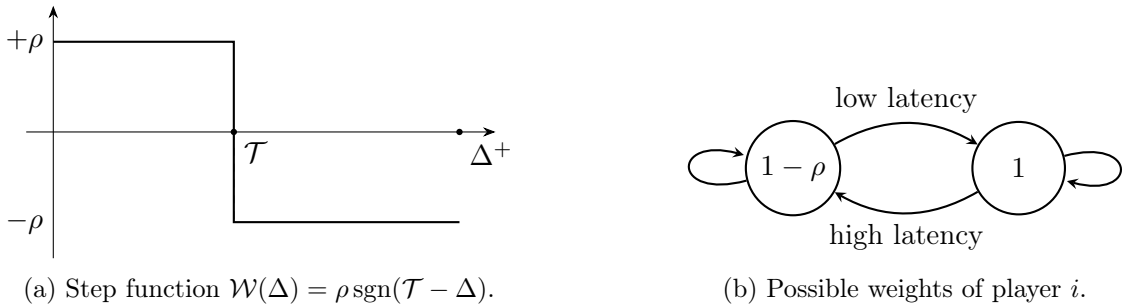

We first observe that once we fix the strategy to $\sigma^*$, the stationary probabilities of 
any player $i$ having weight $w_i$ equal to $1-\rho$ or $1$ is 
independent of the latency distributions of 
other players.
For analysis, consider the following equivalent formulation and the induced Markov chain
$M^{\sigma^*}$ with two states, $1$ and $1-\rho$. In each state, first uniformly randomly pick
some player. If this player is $i$, toss a coin with probability $w_i$. If $i$ survives this coin, 
toss a last coin with probability $p_i$ to determine whether to transition to $1-\rho$ or $1$.
On all other cases, stay in the same state, essentially skipping $i$.
\begin{lemmarep} \label{lem:stationary-weight}
    In the binary weight model with players adopting $\sigma^*$, 
    the stationary distribution of weights in $M^{\sigma^*}$ 
    for any player $i$ is the following:
    \[
    \nu_{1-\rho} = \frac{p_i}{1-\rho(1-p_i)}
    \qquad \text{ and } \qquad 
    \nu_{1} = \frac{(1-\rho)(1-p_i)}{1-\rho(1-p_i)}
    \]
\end{lemmarep}
\begin{proof}
    Observe that the probability of going from $1$ to $1-\rho$ requires $i$ to be selected 
    with probability proportional to its weight $1$ and having high latency.
    Similarly, from $1-\rho$ to $1$, $i$ gets elected with probability proportional to $1-\rho$ and 
    has low latency. Therefore, we can compute the stationary probabilities by:
    \[
    \nu_{1-\rho} = \frac{p_i}{p_i + (1-\rho)(1-p_i)}  = \frac{p_i}{1-\rho(1-p_i)}
    \]
    and 
    \[
    \nu_1= \frac{(1-\rho)(1-p_i)}{p_i + (1-\rho)(1-p_i)} = \frac{(1-\rho)(1-p_i)}{1-\rho(1-p_i)}
    \]
    Note that we omit the factor of $1/n$ as it cancels out.
\end{proof}

Unfortunately, however, there exists an explicit incentive to deviate from $\sigma^*$. 
Consider the scenario in which player $j$ is leaders and during aggregation the latencies were 
borderline---slightly aggregating to below $\mathcal{T}$ with $i$ having the power to push it above $\mathcal{T}$.
In this case, if player $i$ lies, $j$'s weight decreases, increasing $i$'s probability of winning and hence 
utility.

\subsubsection{Low latency.} We do however recover equilibria if $i$ never has the potential to cast a deciding vote.
Consider the scenario in which even the $(m+1)\ts{th}$ vote is always guaranteed to be less than $\mathcal{T}$.
If no reported duration is ever above $\mathcal{T}$, then the players always have weight 1.
\begin{corollaryrep} \label{cor:nash-leader-decay}
    If $\Pr[S^{j,i} + E^{i-j}_{m+1} > \mathcal{T}] = 0$ for all $i$ and $j$ and $m \geq n-c$, 
    then each player $i$ has the following utility:
    \[
    U_i(\sigma^*)
    =
        \frac{\sum_j\E[M(S^{j, i})] + \E\left[B\left(S^{j, i}+ E^{i-j}_{m}\right)\right]}{\sum_{i, j} \E[S^{j, i}]}
    \]
    Moreover, if $\langle M,B \rangle$ are time-decreasing, $\sigma^*$ is a subgame-perfect Bayes-Nash equilibria.
\end{corollaryrep}
\begin{proof}
    Observe that if the assumption holds, \cref{lem:stationary-weight} implies that all players stay 
    at weight 1, even if some player $i$ votes dishonestly, as at best it can influence the $(m+1)\ts{th}$ 
    vote but not the $m\ts{th}$. Hence, we recover the case in \cref{thm:latency-NE}.
\end{proof}
This section highlights the fact that the equilibria structure after introducing decaying 
leader election probabilities depends highly on the latency distributions and how the decay mechanism is designed.

\end{toappendix}

\begin{toappendix}
    \section{List of symbols} \label{apx:glossary}
    {\renewcommand*{\glossarysection}[2][]{}
    \printunsrtglossary[type=symbols,style=long]}
\end{toappendix}

\bibliography{ref.bib}
\end{document}